\let\latexaddtocontents\addtocontents
\let\addtocontents\latexaddtocontents
\newcommand{\ab}{{\bf{a},\bf{b}}}
\newcommand{\id}{\mathbb{1}}
\newtheorem{Theorem}{Theorem}
\newtheorem{Lemma}{Lemma} 
\newtheorem{Def}{Definition}
\newtheorem{Conj}{Conjecture}
\newtheorem{Prop}{Proposition}
\begin{document}

\title{Non-onsite symmetries and quantum teleportation in split-index matrix product states}

\date{\today}

\author{David T. Stephen}
\affiliation{Department of Physics and Center for Theory of Quantum Matter, University of Colorado Boulder, Boulder, CO, USA}
\affiliation{Department of Physics and Institute for Quantum Information and Matter, Caltech, Pasadena, CA, USA}

\begin{abstract}
We describe a class of spin chains with new physical and computational properties. On the physical side, the spin chains give examples of symmetry-protected topological phases that are defined by non-onsite symmetries, \text{i.e.}, symmetries that are not a tensor product of single-site operators. These phases can be detected by string-order parameters, but notably do not exhibit entanglement spectrum degeneracy. On the computational side, the spin chains represent a new class of states that can be used to deterministically teleport information across long distances,
with the novel property that the necessary classical side processing is a non-linear function of the measurement outcomes. We also give examples of states that can serve as universal resources for measurement-based quantum computation, providing the first examples of such resources without entanglement spectrum degeneracy. 
The key tool in our analysis is a new kind of tensor network representation which we call split-index matrix product states (SIMPS). We develop the basic formalism of SIMPS, compare them to matrix product states, show how they are better equipped to describe certain kinds of non-onsite symmetries including anomalous symmetries, and discuss how they are also well-suited to describing quantum teleportation and constrained spin chains. 
\end{abstract}

\maketitle

\section{Introduction} \label{sec:intro}

Quantum spin chains are an endless well of surprising physics. Models such as the Affleck-Kennedy-Lieb-Tasaki (AKLT) chain \cite{Affleck1989}, which initially appear featureless as they break no symmetries, turn out to have rich physics including degeneracy in the boundary and entanglement spectrum, hidden symmetry breaking, and non-local string order \cite{Kennedy1992,denNijs1989,Pollmann2010,Pollmann2012a}. These properties are a consequence of the presence of symmetry-protected topological (SPT) order \cite{Pollmann2010,Chen2011,Schuch2011,Pollmann2012}. Remarkably, these same spin chains are also useful for quantum information processing. For example, they can be used to teleport information across large distances \cite{Popp2005,Else2012,Wahl2012}, generate cat-like entanglement \cite{Tantivasadakarn2021,Williamson2021}, and even perform universal measurement-based quantum computation (MBQC) \cite{Raussendorf2003,Miyake2010,Stephen2017}.

The computational and physical properties of quantum spin chains are intimately related. It has been proven that SPT order---or alternatively string order---implies quantum computational utility of the ground states of spin chains \cite{Doherty2009,Else2012,Miller2015,Marvian2017,Stephen2017,Raussendorf2023}. Conversely, the ability to perform teleportation has been used to witness phase transitions \cite{Verstraete2004,Verstraete2004a,Popp2005,Skrovseth2009} and to probe for SPT order in experiment \cite{Azses2020,Smith2023}.
These relations can be most clearly seen using the matrix product state (MPS) formalism \cite{Perez-Garcia2006,Cirac2021}. Therein, the many-body wavefunction is described in terms of many local tensors whose symmetries underlie both the computational and physical properties of the MPS \cite{Else2012,Pollmann2012}. More generally, the MPS formalism has established itself as the ultimate tool for studying 1D spin chains both analytically and numerically \cite{Verstraete2008review,SCHOLLWOCK2011,Orus2014,Bridgeman2017,Cirac2021}. Therefore, any extension of MPS can potentially have impact on a wide range of fields and applications.

In this paper, we construct a class of quantum spin chains that have novel physical and computational properties. The key concept that we introduce is a new kind of tensor network representation we call \textit{split-index matrix product states (SIMPS)} \eqref{eq:simps}. While equivalent to MPS in terms of which states can be efficiently captured, SIMPS offer an alternative description that we argue is essential to understanding the properties of the constructed models. For example, a key physical property of the models is that they possess symmetries which are non-onsite, \text{i.e.}, not a tensor product of single-site unitaries. Compared to MPS, we show that SIMPS provide a much simpler way to identify and analyze such symmetries. Similarly, SIMPS offer a clearer picture of the computational capabilities of certain models, including ones that have previously appeared in the literature. We also show that SIMPS are well-suited to capturing constrained spin chains, including models exhibiting quantum many-body scars and Hilbert space fragmentation \cite{Moudgalya2022}.

Overall, the SIMPS formalism is an extension of the MPS formalism that better captures certain physical and computational properties of spin chains, clarifies some existing results in the literature, and enables the construction of new models. We summarize the key results of this paper in the next section.

\subsection{Summary of results}

This paper is organized as follows:

\begin{itemize}
    \item In Sec.~\ref{sec:simps}, we introduce the SIMPS formalism, determine when an MPS is suited to a SIMPS representation, show how to optimally convert between MPS and SIMPS representations, and prove a number of general results including a fundamental theorem of SIMPS.
    
    \item In Sec.~\ref{sec:spt}, we use SIMPS to construct a large class of spin chains that have SPT order under non-onsite symmetries. We characterize these phases in terms of string order, response to flux insertion, (a lack of) entanglement spectrum degeneracy, and by revealing the ``universal fingerprints'' in the entanglement structure. We also show how SIMPS are well-equipped to handle anomalous symmetries.

    \item In Sec.~\ref{sec:computation}, we show how these SIMPS are resources for long-range quantum teleportation and MBQC. Key differences distinguishing our resources from existing ones are the necessity of non-linear classical side processing, and MBQC universality in the absence of entanglement spectrum degeneracy.

    \item In Sec.~\ref{sec:constrained}, we show how SIMPS are suited to describing constrained spin chains, since generic local constraints can be encoded as linear constraints on the SIMPS tensors rather than non-linear constraints on an MPS tensor. We show that previous MPS Ans\"{a}tze used to parameterize constrained Hilbert spaces arise naturally from SIMPS, which then provides a route for more general Ans\"{a}tze.

    \item Finally, in Sec.~\ref{sec:discussion}, we apply our results to clarify an example that previously appeared in the literature, discuss how our results support the conjectured equivalence between quantum teleportation and SPT order, and outline possible future applications of SIMPS.
\end{itemize}

\section{The SIMPS representation} \label{sec:simps}

In this section, we define the SIMPS representation, prove some fundamental properties of SIMPS, and show how to convert between MPS and SIMPS. To begin, we review the definition of MPS. An MPS with periodic boundary conditions (PBC) has the following form,
\begin{equation} \label{eq:mps}
    |\psi\rangle = \sum_{i_1,\dots, i_N} \mathrm{Tr}(A^{i_1}A^{i_2}\cdots A^{i_N}) |i_1i_2\cdots i_N\rangle,
\end{equation}
where the physical indices $i_k$ takes values from in $0,\dots, d-1$, each $A^i$ is a $D\times D$ matrix that acts in the so-called \textit{virtual space} and $D$ is the \textit{bond dimension}. An important class of MPS is that of \textit{normal MPS},

\begin{Def}[Normal MPS] \label{def:normalmps}
    An MPS tensor $A^i$ is said to be \emph{normal} if there is a number $L_0$ called the \emph{injectivity length} such that $\mathrm{span}\{A^{i_1}A^{i_2}\cdots A^{i_{L_0}}\}_{i_1,\dots, i_{L_0}}=M_D$ where $M_D$ is the space of all complex $D\times D$ matrices. If $L_0=1$, the MPS is said to be \emph{injective}.
\end{Def}

\noindent
Physically, a normal MPS is the unique ground state of a particular gapped parent Hamiltonian, and therefore has a finite correlation length \cite{Perez-Garcia2006}. Conversely, non-normal MPS have long-range correlations and are associated with spontaneous symmetry breaking \cite{Schuch2010}.
Note that, if the spanning property holds for length $L_0$, it also holds for all $L\geq L_0$. This property is equivalent to the map 
\begin{equation} \label{eq:inj_map}
 A_L (X) = \sum_{i_1\dots i_{L_0}}\mathrm{Tr}(A^{i_1}A^{i_2}\cdots A^{i_{L}} X)|i_1\cdots i_{L}\rangle   
\end{equation}
being an injective map from the virtual degrees of freedom to the physical degrees of freedom for all $L\geq L_0$ \cite{Perez-Garcia2006}. Thus injectivity means we can invert this map, allowing us to reveal the virtual legs of the MPS. This is important for constructing parent Hamiltonians \cite{Schuch2010} and string order parameters \cite{Pollmann2012a}.

Given the MPS $|\psi\rangle$, we can define a state on open boundary conditions (OBC) in the following way,
\begin{equation} \label{eq:mps_obc}
    |\bar{\psi}\rangle = \sum_{\substack{i_1,\dots, i_N \\ a,b}} \langle a|A^{i_1}A^{i_2}\cdots A^{i_N}|b\rangle |a,i_1\cdots i_N,b\rangle,
 \end{equation}
where we have added $D$-dimensional qudits on the edges of the chain indexed by $a$ and $b$ which take values in $0,\dots, D-1$, such that $|\bar{\psi}\rangle$ lives in a Hilbert space $\mathbb{C}^D\otimes (\mathbb{C}^d)^{\otimes N} \otimes \mathbb{C}^D$. We call the $d$-dimensional spins the bulk spins and the $D$-dimensional spins the boundary spins. This OBC form will be useful when we describe quantum teleportation using MPS.

Now we introduce SIMPS. A SIMPS describes a state $|\psi\rangle$ defined on a ring of $N$ spins of dimension $d$ as follows,
\begin{equation} \label{eq:simps}
    |\psi\rangle = \sum_{i_1,\dots, i_N} \mathrm{Tr}(B^{i_1i_2}B^{i_2i_3}\cdots B^{i_Ni_1}) |i_1i_2\cdots i_N\rangle.
\end{equation}
where each $B^{ij}$ is a $\chi^i\times\chi^j$ matrix. Most explicit examples that we consider in this paper have $\chi^i=\chi$ for all $i$, in which case we call $\chi$ the SIMPS bond dimension. We will see soon that $\chi$ is in general different from the bond dimension $D$ of the standard MPS representation. 

As in an MPS representation, a SIMPS is described in terms of a trace of a product of matrices that depend on the state of the physical spins. However, unlike the standard MPS representation, each matrix in a SIMPS is determined by the states of two neighbouring spins. The name ``split-index'' MPS comes from the fact that the state of every physical spin is ``split'' across two tensors $B^{ij}$. This splitting operation is defined with respect to a particular basis of the physical spins. Therefore, one important difference between MPS and SIMPS is that the SIMPS representation is strongly dependent on the choice of physical basis. In particular, the bond dimension can in general  depend on the basis of the physical spins $\{|i\rangle\}$. To be more general, one could consider tensors that depend on a larger number of physical indices. But, after sufficient enlargement of the unit cell, we can always return to the form of Eq.~\eqref{eq:simps}.

We remark that the SIMPS construction is very similar to double-line tensor networks \cite{Shukla2018,Jiang2017,Williamson2016mpo}, which share the feature that a given physical degree of freedom is used in more than one tensor in the network (see also \cite{Lami2022}). 
% {More precisely, given a SIMPS tensor $B^{ij}$ we can define a double-line MPS }
Here, we examine this kind of representation from a more fundamental standpoint.

 We can also define an OBC form of SIMPS in the case that $\chi^i=\chi$ for all $i$,
\begin{equation} \label{eq:simps_obc}
    |\bar{\psi}\rangle = \sum_{\substack{i_1,\dots, i_N \\ \alpha,\beta}} \langle \alpha|B^{i_1i_2}\cdots B^{i_{N-1}i_N}|\beta\rangle |\alpha,i_1\cdots i_N,\beta\rangle,
 \end{equation}
in which we have added $\chi$-dimensional spins to the left and right boundaries of the chain indexed by $\alpha$ and $\beta$ which take values in $0,\dots, \chi-1$. Note that the most natural boundary condition in this representation involves spins of dimension $\chi$, rather than the corresponding MPS bond dimension $D$.

As for MPS, we can identify an important subclass of SIMPS that we call normal SIMPS,
\begin{Def}[Normal SIMPS] \label{def:normalsimps}
    A SIMPS tensor $B^{ij}$ is \emph{normal} if there is a finite $L_1$ such that $\mathrm{span}\{B^{s_1i_1}B^{i_1i_2}\cdots B^{i_{L_1}s_2}\}_{i_1,\dots i_{L_1}}=M_{\chi^{s_1},\chi^{s_2}}$ for all $s_1,s_2$, where $M_{\chi^{s_1},\chi^{s_2}}$ is the space of all $\chi^{s_1}\times\chi^{s_2}$ matrices. If $L_1=1$, the SIMPS tensor is said to be \emph{injective}.
\end{Def}
\noindent

{The above definition is equivalent to the condition that the state obtained by projecting every $(L_1+1)$-th spin onto an arbitrary fixed basis state is an injective MPS.
As in the case of MPS, this property is stable in that, if it holds for some length $L$, then it holds for $L+1$ as well, and therefore for all lengths $>L$. To see this, observe that,
\begin{equation}
    \begin{aligned}
        &\mathrm{span}\{B^{s_1i_1}B^{i_1i_2}\cdots B^{i_{L+1}s_2}\}_{i_1,\dots, i_{L+1}} \\ 
        &= \mathrm{span}\{B^{s_1i_1}B^{i_1i_2}\cdots B^{i_{L}s_3}B^{s_3s_2}\}_{i_1,\dots, i_{L};s_3} \\
        &\supset \mathrm{span}\{B^{s_1i_1}B^{i_1i_2}\cdots B^{i_{L-1}s_3}B^{s_3s_2}\}_{i_1,\dots, i_{L-1};s_3} \\
        &= \mathrm{span}\{B^{s_1i_1}B^{i_1i_2}\cdots B^{i_{L}s_2}\}_{i_1,\dots, i_{L}}
    \end{aligned}
\end{equation}
where, going from the second line to the third line, we used the normality for length $L$, which holds for all fixed values of $s_3$.
}

As with MPS, we can construct an operator that reveals the virtual space of a normal SIMPS. Define the map, 
\begin{equation}
\begin{aligned}
&B_{[s_1,L,s_2]}(X)= \\ 
&\sum_{i_1\dots i_{L}}\mathrm{Tr}(B^{s_1i_1}B^{i_1i_2}\dots B^{i_{L}s_2} X)|i_1\dots i_{L}\rangle.
\end{aligned}
\end{equation}
If the SIMPS is normal and $L\geq L_1$, then $B_{[s_1,L,s_2]}$ has an inverse $B^{-1}_{[s_1,L,s_2]}$ such that $B^{-1}_{[s_1,L,s_2]}B_{[s_1,L,s_2]}=\id$. 
Using this, we can define the operator, \begin{equation} \label{eq:binv}
    B_L=\sum_{s_1,s_2} |s_1\rangle\langle s_1|\otimes B_{[s_1,L,s_2]}\otimes |s_2\rangle\langle s_2|,
\end{equation}
{For $L\geq L_1$, this map is invertible and its inverse $B_L^{-1}$ acts on $L+2$ spins and serves the purpose of revealing the virtual indices of the SIMPS. Later, we will use this map to construct string order parameters.}
One could also use this map to construct parent Hamiltonians for SIMPS which may have different properties than they corresponding MPS parent Hamiltonians, but we do not pursue this direction here.

In the next section, we will show that every normal MPS can be represented as a normal SIMPS and vice versa. Therefore, the nice physical properties of normal MPS such as the existence of gapped parent Hamiltonians and finite correlation length carry over directly to normal SIMPS.

\subsection{Converting between MPS and SIMPS} \label{sec:converting}

Now, we show how one can convert between an MPS and SIMPS representation of a state. In particular, we show that the SIMPS representation differs fundamentally  from the MPS representation only when some of the matrices $A^i$ are not full-rank for some physical basis $\{i \}$. 

First, suppose we have a SIMPS tensor $B^{ij}$. We can convert the SIMPS to an MPS by cutting $B^{ij}$ in half via singular value decomposition (SVD). Namely, define $\mathbb{B}=\sum_{i,j,\alpha,\beta} B^{ij}_{\alpha\beta} |i\alpha\rangle\langle j\beta|$ {which is a $\chi\times\chi$ matrix with $\chi=\sum_i \chi_i$} and use the SVD to write $\mathbb{B}=usv^\dagger$ where $u,v$ are isometries mapping from $\mathbb{C}^\chi$ to $\mathbb{C}^D$ satisfying $u^\dagger u = v^\dagger v=\id$ and $s$ is a $D\times D$ diagonal matrix. Therein, $D$ is the rank of $\mathbb{B}$. For compactness, let us absorb $s$ into $v$, defining $w^\dagger=sv^\dagger$.
From this, we define the matrices $A^i$ as,
\begin{equation} \label{eq:simpstomps}
    A^i = w^\dagger(|i\rangle\langle i|\otimes \id)u\ .
\end{equation}
Then, it can be checked that Eq.~\eqref{eq:simps} can be equivalently written as an MPS \eqref{eq:mps} with bond dimension $D$ defined by the matrices $A^i$.

Now, we show how to convert an MPS to a SIMPS. Naively, every MPS can be written as a SIMPS by simply ``discarding'' one of the physical indices: Given an MPS tensor $A^i$, we can define a SIMPS tensor that generates the same state as $B^{ij} = A^i$. However, this will not lead to a SIMPS with minimal bond dimension in general. Instead, we first write $A^i=A^iP^iP^{i\dagger}$ where $P^iP^{i\dagger}$ is the projector onto the domain of $A^i$, and $P^i:\mathbb{C}^{\chi^i} \rightarrow \mathbb{C}^D$ is an isometry where $\chi^i$ is the rank of $A^i$ such that $P^{i\dagger}P^i=\id_{\chi^i}$. Then, we define the SIMPS tensor,
\begin{equation} \label{eq:mpstosimps}
    B^{ij}=P^{i\dagger}A^jP^j\ .
\end{equation}
One can straightforwardly confirm that the SIMPS generated by the tensor $B^{ij}$ defined in this way is the same state as the MPS generated by $A^i$. The matrices $B^{ij}$ have dimension $\chi^i\times \chi^j$.

In the generic case, $A^i$ is full rank so $P^i=\id$ and $B^{ij}=A^i$. The cases where the SIMPS representation is distinct from the MPS representation are therefore those for which $A^i$  is not full-rank in some physical basis $\{i\}$. This leads us to the following definition of when a MPS ``makes a good SIMPS''; \text{i.e.}, when the SIMPS and MPS representations differ,
\begin{Def}[Good SIMPS]
    A normal MPS $A^i$ makes a good SIMPS if the matrix $A^i$ is rank deficient for some $i$.
\end{Def}
Since generic matrices have full rank, it may seem that only very fine-tuned MPS make good SIMPS. However, we will see that certain physical conditions, such as enforcing non-onsite symmetries or local constraints, will imply that an MPS makes a good SIMPS (see Props. \ref{theorem:nononsite_good} and \ref{theorem:rydberg_good}).

The conversions described above are optimal, in the sense that they lead to MPS and SIMPS of minimal bond dimension. This is a consequence of the following result which is proven in Appendix \ref{app:conversion},
\begin{Prop}[Optimal conversion] \label{prop:conversion}
    Given a SIMPS tensor $B^{ij}$ that is normal with injectivity length $L_1$, the MPS tensor defined in Eq.~\eqref{eq:simpstomps} is normal with injectivity length $L_0\leq L_1+2$. Similarly, given an MPS tensor $A^i$ that is normal with injectivity length $L_0$, the SIMPS tensor defined in Eq.~\eqref{eq:mpstosimps} is normal with injectivity length $L_1\leq L_0$.
\end{Prop}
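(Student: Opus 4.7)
The plan is to prove both claims by reducing a product of the constructed tensors to a product of the original tensors sandwiched between isometries, and then invoking the spanning hypothesis of the source representation. The two directions are structurally dual.

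For the SIMPS-to-MPS direction, I would exploit the identity $uw^\dagger=\mathbb{B}$ (which follows from $w^\dagger=sv^\dagger$) to telescope
\begin{equation*}
A^{i_1}\cdots A^{i_L} = w^\dagger\,\Pi_{i_1}\,\mathbb{B}\,\Pi_{i_2}\,\mathbb{B}\cdots\mathbb{B}\,\Pi_{i_L}\,u,
\end{equation*}
where $\Pi_i=|i\rangle\langle i|\otimes\id$ projects onto the $i$-block of $\mathbb{C}^\chi$. Since $\Pi_i\mathbb{B}\Pi_j$ extracts the block $B^{ij}$, the interior collapses to give
\begin{equation*}
A^{i_1}\cdots A^{i_L} = \sum_{\alpha,\gamma}\left[B^{i_1 i_2}\cdots B^{i_{L-1}i_L}\right]_{\alpha\gamma}(w^\dagger|i_1\alpha\rangle)(\langle i_L\gamma|u).
\end{equation*}
Setting $L=L_1+2$ and fixing the boundary indices $i_1=s_1$, $i_L=s_2$, SIMPS normality ensures that varying the $L_1$ intermediate indices makes the inner $B$-product sweep out all of $M_{\chi^{s_1},\chi^{s_2}}$. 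Hence every outer product $(w^\dagger|s_1\alpha\rangle)(\langle s_2\gamma|u)$ lies in the span of $\{A^{i_1}\cdots A^{i_L}\}$. Because $w^\dagger=sv^\dagger$ has invertible $s$ and $v^\dagger v=\id$, the map $w^\dagger$ is surjective, and $u^\dagger$ is surjective for the symmetric reason; so $\{w^\dagger|s\alpha\rangle\}$ spans $\mathbb{C}^D$ and $\{\langle s\gamma|u\}$ spans $(\mathbb{C}^D)^*$, and their outer products fill $M_D$. This yields the bound $L_0\leq L_1+2$.

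For the MPS-to-SIMPS direction, the key identity is $A^iP^iP^{i\dagger}=A^i$, which holds because $P^iP^{i\dagger}$ is the orthogonal projector onto the coimage of $A^i$. Inserting this between consecutive factors of $B^{ij}=P^{i\dagger}A^jP^j$ telescopes the SIMPS product into
\begin{equation*}
B^{s_1 i_1}B^{i_1 i_2}\cdots B^{i_L s_2} = P^{s_1\dagger}\,A^{i_1}A^{i_2}\cdots A^{i_L}A^{s_2}\,P^{s_2}.
\end{equation*}
Setting $L=L_0$, MPS normality gives $\mathrm{span}\{A^{i_1}\cdots A^{i_{L_0}}\}=M_D$, so the SIMPS span equals $P^{s_1\dagger}\,M_D\,A^{s_2}P^{s_2}$. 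Since $P^{s_2}$ was chosen so that $A^{s_2}P^{s_2}$ has full column rank $\chi^{s_2}$, right multiplication by it maps $M_D$ onto the space of all $D\times\chi^{s_2}$ matrices; and since $P^{s_1\dagger}$ is surjective, left multiplication by it maps this onto $M_{\chi^{s_1},\chi^{s_2}}$. This gives SIMPS normality with $L_1\leq L_0$.

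The main obstacle is bookkeeping around the varying bond dimensions $\chi^i$: each $P^i$ lives in a space of a different size, the blocks of $\mathbb{B}$ have different shapes, and the surjectivity and full-column-rank statements used in both directions must be checked to hold uniformly across $s_1$ and $s_2$. The underlying linear-algebra inputs---that right multiplication by a full-column-rank matrix is a surjection onto the appropriate matrix space, and that outer products of two spanning sets span the full space of homomorphisms---are standard, but they should be invoked explicitly to close the bounds.
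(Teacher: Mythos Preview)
Your proposal is correct and follows essentially the same approach as the paper's proof: both directions use the same telescoping identities $A^{i_1}\cdots A^{i_L}=w^\dagger(|i_1\rangle\langle i_L|\otimes B^{i_1i_2}\cdots B^{i_{L-1}i_L})u$ and $B^{s_1 i_1}\cdots B^{i_L s_2}=P^{s_1\dagger}A^{i_1}\cdots A^{i_L}A^{s_2}P^{s_2}$, together with the full-rank/surjectivity properties of $u,w^\dagger,P^{s\dagger},A^{s}P^{s}$. The only cosmetic difference is that the paper argues ``backward'' by producing, for an arbitrary target matrix $M$, an explicit preimage $M'$ (e.g.\ $M'=wMu^\dagger$ or $M'=P^{s_1}MP^{s_2\dagger}(A^{s_2})^{-1}$) and then pushing it through the telescoped product, whereas you argue ``forward'' that the image of the spanning set under surjective left/right multiplications is full; these are equivalent formulations of the same linear-algebra step.
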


\subsection{Fundamental theorem of SIMPS}

We now state the SIMPS equivalent of the fundamental theorem of MPS. The theorem describes how two normal SIMPS tensors generating the same state are related. For MPS, the fundamental theorem states that if $A^i$ and $B^i$ are both normal and generate the same MPS, then there is an invertible matrix $V$ such that $A^i=VB^i V^{-1}$, \text{i.e.}, the two tensors are related by a so-called gauge transformation \cite{Perez-Garcia2008}. For injective SIMPS, gauge transformations like $B^{ij} \rightarrow VB^{ij} V^{-1}$ can still be used to obtain new tensors generating the same state. However, it turns out that a larger class of transformations is also possible,
\begin{Theorem}[Fundamental Theorem of SIMPS] \label{theorem:fundamental}
    Let $A^{ij}$ and $B^{ij}$ be two normal SIMPS tensors that generate the same state for all system sizes. Then, there is a set of invertible matrices $V_i$ for $i=0,\dots,d-1$ such that,
    \begin{equation} \label{eq:fundamental}
        A^{ij} = V_iB^{ij}V_j^{-1}.
    \end{equation}
\end{Theorem}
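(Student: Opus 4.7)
The plan is to reduce Theorem~\ref{theorem:fundamental} to the fundamental theorem of MPS via the SIMPS-to-MPS conversion of Sec.~\ref{sec:converting}. Given normal SIMPS tensors $A^{ij}$ and $B^{ij}$ generating the same state, I convert both to MPS tensors $\tilde A^i$ and $\tilde B^i$ using Eq.~\eqref{eq:simpstomps}. By Prop.~\ref{prop:conversion}, both are normal MPS, and since they generate the same state for every system size, the standard fundamental theorem of MPS produces a single invertible matrix $W$ with $\tilde A^i = W \tilde B^i W^{-1}$ for every $i$, and forces the MPS bond dimensions to coincide, $D_A = D_B$.

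Next, I unpack the factored structure of the MPS tensors coming from the SVD-based conversion. Writing the SVD as $\mathbb{A} = u_A s_A v_A^\dagger$ and letting $I^i : \mathbb{C}^{\chi^i_A} \hookrightarrow \mathbb{C}^\chi$ denote the inclusion into the $i$-th block of $\mathbb{C}^\chi = \bigoplus_i \mathbb{C}^{\chi^i_A}$, one has $\tilde A^i = L^i_A R^i_A$ with $L^i_A := s_A v_A^\dagger I^i$ and $R^i_A := (I^i)^\dagger u_A$, and analogously for $B$. A direct computation yields the key identity $R^{s_1}_A L^{s_2}_A = (I^{s_1})^\dagger u_A s_A v_A^\dagger I^{s_2} = (I^{s_1})^\dagger \mathbb{A} I^{s_2} = A^{s_1 s_2}$, so the SIMPS tensor is recovered by contracting two consecutive MPS factor blocks.

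The crux of the argument is a rank analysis showing that $L^i_A$ and $R^i_A$ both have full rank $\chi^i_A$, so that $\tilde A^i$ has rank exactly $\chi^i_A$. Suppose for contradiction $L^i_A v = 0$ for some nonzero $v \in \mathbb{C}^{\chi^i_A}$. Since $s_A$ is invertible this forces $\mathbb{A} I^i v = 0$, whence $A^{s i} v = 0$ for every $s$. It follows that every product $A^{s_1 j_1} A^{j_1 j_2} \cdots A^{j_{L-1} i} v$ vanishes for arbitrary choices of the middle indices, contradicting the SIMPS normality hypothesis that such products span $M_{\chi^{s_1}_A, \chi^i_A}$. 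A symmetric argument handles $R^i_A$, and rank invariance under conjugation by $W$ then forces $\chi^i_A = \chi^i_B =: \chi^i$.

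Finally, both $\tilde A^i = L^i_A R^i_A$ and $\tilde A^i = (W L^i_B)(R^i_B W^{-1})$ are rank-$\chi^i$ factorizations of the same matrix with all four factors of full rank. By the standard uniqueness of such factorizations up to an invertible $\chi^i \times \chi^i$ change of basis, there exist invertible matrices $V_i$ with $L^i_A = W L^i_B V_i^{-1}$ and $R^i_A = V_i R^i_B W^{-1}$. Substituting into the identity $A^{s_1 s_2} = R^{s_1}_A L^{s_2}_A$ gives $A^{s_1 s_2} = V_{s_1} R^{s_1}_B L^{s_2}_B V_{s_2}^{-1} = V_{s_1} B^{s_1 s_2} V_{s_2}^{-1}$, which is Eq.~\eqref{eq:fundamental}. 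The rank argument of the third paragraph is the main obstacle: it is precisely the step where the normality of the SIMPS (rather than just that of its associated MPS) enters decisively.
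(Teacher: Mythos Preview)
Your proof is correct and takes a genuinely different route from the paper's.

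The paper's proof in Appendix~\ref{app:fundamental} works at the level of physical spins: it projects every other spin onto fixed basis states, turning each SIMPS into a non-translationally-invariant MPS, and then invokes the fundamental theorem of MPS in that setting to obtain gauge matrices $X_{s,s'}$. A consistency check (comparing projections onto every second versus every third site) shows that $X_{s,s'}$ depends only on $s$, and a final contraction yields Eq.~\eqref{eq:fundamental}.

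Your approach instead passes through the SIMPS-to-MPS conversion of Sec.~\ref{sec:converting}: you convert both SIMPS to translationally invariant MPS, apply the standard fundamental theorem once to get a single $W$, and then exploit the factorized structure $\tilde A^i = L^i_A R^i_A$ together with the identity $R^{s_1}_A L^{s_2}_A = A^{s_1 s_2}$. The rank analysis showing that $L^i_A$ and $R^i_A$ are full rank is exactly where SIMPS normality enters, as you note; it also forces $\chi^i_A = \chi^i_B$ and makes the rank-factorization uniqueness step go through. This route is more algebraic, reuses the conversion machinery already built in Sec.~\ref{sec:converting}, and avoids the non-translationally-invariant fundamental theorem and the bookkeeping between different projection spacings. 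The paper's route, by contrast, is more self-contained at the physical-spin level and makes the connection to the SIMPS injectivity picture (projecting sites) more transparent.
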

\noindent
This is proven in Appendix \ref{app:fundamental}. The fundamental theorem shows that, once the basis of the physical spins is chosen, the bond dimension of all normal SIMPS representations of a state is the same. Therefore, it is well-defined to refer to the SIMPS bond dimension of a state for a given physical basis. The fundamental theorem will be essential to our understanding on non-onsite symmetries in SIMPS.

\subsection{Examples of SIMPS}

Here we give several examples of states which highlight differences between the SIMPS and MPS representations. More examples will appear in later sections of the paper.

\vspace{3mm}
\noindent
\textit{Example 1: The cluster state.} First, we consider the 1D cluster state, which is a paradigmatic resource state for MBQC \cite{Raussendorf2001a}, and also a simple example of a state with SPT order \cite{Son2012}. The cluster state can be written as,
\begin{equation} \label{eq:cluster}
    |C\rangle= \prod_{i=1}^N CZ_{i,i+1} |++\cdots +\rangle,
\end{equation} 
where $|\pm\rangle = \frac{1}{\sqrt{2}}(|0\rangle \pm |1\rangle)$ and $CZ$ is the controlled-$Z$ gate defined by $CZ|ij\rangle = (-1)^{ij}|ij\rangle$.
From this circuit representation, one can obtain the following MPS representation of the cluster state,
\begin{equation} \label{eq:cluster_mps}
        C^0=|0\rangle\langle +|, \quad  C^1=|1\rangle\langle -|\ .
\end{equation}
Observe that both $C^0$ and $C^1$ are rank one. Therefore, the cluster state can be written as a SIMPS with bond-dimension $\chi=1$ where the SIMPS tensor is defined as,
\begin{equation}
    B^{ij}=(-1)^{ij}, 
\end{equation}
where $i,j\in\{0,1\}$. If we instead consider the physical $X$-basis spanned by the states $|\pm\rangle$, then the MPS reads,
\begin{equation} \label{eq:cluster_mps_X}
    \begin{aligned}
        &C^+ := \frac{1}{\sqrt{2}}(C^0 + C^1)= H\ ,\\
        &C^- := \frac{1}{\sqrt{2}}(C^0 - C^1)= HZ\ ,
    \end{aligned}
\end{equation}
where $H=\frac{X+Z}{\sqrt{2}}$ is the Hadamard matrix. These matrices have full rank, so the SIMPS representation in the physical $X$-basis will not differ from the MPS representation. This showcases the importance of the physical basis in the SIMPS representation.

\vspace{3mm}
\noindent
\textit{Example 2: The nice SIMPS.} 
This example will serve to demonstrate many of the advantageous properties of SIMPS. In Sec.~\ref{sec:sptnononsite}, we will show that the state has SPT order protected by a non-onsite symmetry. The state can be defined by the following MPS tensor with $d=2$ and $D=3$,
\begin{equation} \label{eq:mps_nice_ex}
    A^0 =\begin{pmatrix}
        1 & 0 & 0 \\
        0 & 0 & 0 \\
        0 & 1 & 1
    \end{pmatrix}
    ,\quad
    A^1 =\begin{pmatrix}
        0 & 1 & -1 \\
        -1 & 0 & 0 \\
        0 & 0 & 0
    \end{pmatrix}.
\end{equation}
It is straightforward to verify that this MPS is normal, so it describes the unique ground state of a gapped Hamiltonian. Both matrices are rank deficient, having rank equal to two. Therefore, using Eq.~\eqref{eq:mpstosimps}, we can derive the corresponding SIMPS tensor with bond dimension $\chi=2$,
\begin{equation} \label{eq:simps_nice_ex}
    \begin{array}{lll}
        B^{00}=\id & B^{01} = \id   \\[\medskipamount]
        B^{10}= X & B^{11} = XZ 
    \end{array}
\end{equation}
Notice that the matrices $B^{ij}$ are all unitary, and are in fact all Pauli operators. This will be important for understanding the use of this state as a resource for quantum teleportation in Sec.~\ref{sec:simps_wire}. We remark that MPS with similar structure to the one above first appeared in Ref.~\cite{Wahl2012}, as discussed further in Sec.~\ref{sec:wahl}.

\vspace{3mm}
\noindent
\textit{Example 3: GHZ state.} As a simple example of a state with long-range order, consider the Greenberger–Horne–Zeilinger (GHZ) state,
\begin{equation}
    |GHZ\rangle = \frac{1}{\sqrt{2}}(|00\cdots 0\rangle + |11\cdots 1\rangle)\ ,
\end{equation}
which has the $D=2$ MPS representation $A^i = |i\rangle\langle i|$. The cat-like entanglement and long-range order of this state shows that this is not a normal MPS. Since the matrices $A^i$ have rank one, the GHZ state can be written as a non-normal SIMPS with bond-dimension $\chi=1$ by writing $B^{ij}=\delta_{ij}$.

\vspace{3mm}
\noindent
\textit{Example 3: State with anomalous symmetry.} For a more interesting example of a state with long-range order, consider the following state from Ref.~\cite{GarreRubio2023},
\begin{equation}
    |\psi\rangle = \frac{1}{\sqrt{2}}(|++\cdots +\rangle + \prod_i Z_iCZ_{i,i+1}|++\cdots +\rangle)\ .
\end{equation}
This state is invariant under the operator $U_{CZX}=\prod_i X_i \prod_{i=1}^N Z_iCZ_{i,i+1}$ which is an anomalous $Z_2$ symmetry \cite{Chen2011,GarreRubio2023}. An MPS representation for the state can be obtained as a direct sum of the MPS representations of $|++\cdots +\rangle$ and  $\prod_i Z_iCZ_{i,i+1}|++\cdots +\rangle)$,
\begin{equation} \label{eq:mps_czb}
    A^0 =\begin{pmatrix}
        1 & 0 & 0 \\
        0 & 1 & 1 \\
        0 & 0 & 0
    \end{pmatrix}
    ,\quad
    A^1 =\begin{pmatrix}
        1 & 0 & 0 \\
        0 & 0 & 0 \\
        0 & -1 & 1
    \end{pmatrix}.
\end{equation}
These matrices each have rank two. The corresponding SIMPS tensor then has $\chi=2$,
\begin{equation} \label{eq:simps_anom_ex}
    \begin{array}{lll}
        B^{00}=\id & B^{01} = \id   \\[\medskipamount]
        B^{10}=Z & B^{11} = \id \ .
    \end{array}
\end{equation}
Note that, because the matrices $B^{ij}$ are all diagonal, this SIMPS is not normal.

\section{Non-onsite symmetries and SPT phases in SIMPS} \label{sec:spt}

In this section, we demonstrate the one context where SIMPS are advantageous over MPS: capturing non-onsite symmetries. First, we review how onsite symmetries are captured in MPS. Then, we show how similar ideas can be used to analyze certain non-onsite symmetries in SIMPS. This includes a general class of anomalous group-like symmetries. We then define a class of SIMPS possessing non-trivial SPT order protected by non-onsite symmetries and characterize their physical properties.

\subsection{SPT order in MPS} \label{sec:mps_spt}

Here we review how MPS can represent symmetries and SPT order. Suppose an MPS has a global onsite symmetry on PBC of the form $u^{\otimes N} |\psi\rangle = |\psi\rangle$. Then we have the relation \cite{Perez-Garcia2008},
\begin{equation} \label{eq:mps_symm}
\tilde{A}^i := \sum_j u_{ij} A^j = V A^i V^\dagger,
\end{equation}
for some unitary matrices $V$, where $u_{ij}=\langle i|u|j\rangle$. To see how this relation implies the global symmetry, we can write,
\begin{equation} \label{eq:symm_pushthrough}
    \begin{aligned}
        &u^{\otimes N}|\psi\rangle = \sum_{i_1,\dots, i_N} \mathrm{Tr}(\tilde{A}^{i_1}\cdots \tilde{A}^{i_N})|i_1\cdots i_N\rangle \\
        &= \sum_{i_1,\dots, i_N} \mathrm{Tr}(VA^{i_1}V^\dagger\cdots VA^{i_N}V^\dagger)|i_1\cdots i_N\rangle \\
        &= \sum_{i_1,\dots, i_N} \mathrm{Tr}(A^{i_1}\cdots A^{i_N})|i_1\cdots i_N\rangle = |\psi\rangle
    \end{aligned}
\end{equation}
where the operators $V$ and $V^\dagger$ cancelled pairwise in the third step. Therefore, MPS are able to encode global symmetries in terms of local symmetries of the single tensors.

If there is a set of symmetries $u_g$ that form a representation of a group $G$ such that $u_gu_h = u_{gh}$, then the corresponding operators $V_g$ that appear in Eq.~\eqref{eq:mps_symm} form a projective representation, $V_gV_h = \omega(g,h) V_{gh}$ for some phase factors $\omega(g,h)$ \cite{Schuch2011}. If $G$ is finite abelian, which we always assume from now on, we can instead characterize the representation by the relation $V_gV_h = \Omega(g,h) V_h V_g$ where $\Omega(g,h)=\omega(g,h)\omega(h,g)^{-1}$. In the case of finite abelian $G$, the phases $\Omega(g,h)$ uniquely label the SPT phase to which the MPS belongs \cite{Berkovich1998}. If these phases are not all equal to one, then the MPS belongs to a non-trivial SPT phase protected by $G$ \cite{Pollmann2010,Chen2011,Schuch2011}.

If Eq.~\eqref{eq:mps_symm} is satisfied, it is straightforward to show that the MPS on OBC \eqref{eq:mps_obc} is invariant under the operators $V_g\otimes u_g^{\otimes N}\otimes V_g^*$ where $V_g$ and $V_g^*$ act on the boundary spins where $*$ denotes complex conjugation. The fact that the boundary spins in non-trivial SPT phases carry a projective representation implies a symmetry protected degeneracy at the physical boundary of the chain and in the entanglement spectrum \cite{Pollmann2010}.

\vspace{3mm}
\noindent
\textit{Example: the cluster state.} The cluster state is symmetric under a $Z_2\times Z_2$ symmetry generated by $X_A = \prod_{i=1}^{N/2} X_{2i}$ and $X_B = \prod_{i=1}^{N/2} X_{2i+1}$ \cite{Son2012}. Since this symmetry has a translation unit cell of length two, we must use a two-spin unit cell to express the symmetry in an onsite way. In a two-spin unit cell, the MPS tensor for the cluster state can be defined as,
\begin{equation} \label{eq:cluster_mps_X_2site}
    \begin{aligned}
        &C_2^{++} := C^+C^+ = \id\\
        &C_2^{+-} := C^+C^- = Z\\
        &C_2^{-+} := C^-C^+ = X\\
        &C_2^{--} := C^-C^- = XZ\ .
    \end{aligned}
\end{equation}
From this, Eq.~\eqref{eq:mps_symm} looks like,
\begin{equation} \label{eq:cluster_mps_symmetries}
    \begin{aligned}
        &\sum_{ij}[XI]_{ij,i'j'}C_2^{i'j'} = Z C_2^{ij}Z\\
        &\sum_{ij}[IX]_{ij,i'j'}C_2^{i'j'} = X C_2^{ij}X\ .
    \end{aligned}
\end{equation}
Since $X$ and $Z$ anticommute, this shows that the cluster state has nontrivial SPT order under the $Z_2\times Z_2$ symmetry. Physically, this anticommutation manifests as a two-fold symmetry protected degeneracy in the edge spectrum and the entanglement spectrum.

\subsection{Non-onsite symmetries in SIMPS} \label{sec:simps_symm}

Now, we show that SIMPS are better suited to capturing a certain class of non-onsite symmetries compared to MPS. Let $U$ be a two-body unitary gate that is diagonal in the physical basis of the SIMPS, such that $U|ij\rangle = e^{i\varphi_{ij}}|ij\rangle$ for some phases $e^{i\varphi_{ij}}$. Define the symmetry operator $\mathcal{U}=\prod_{i=1}^N U_{i,i+1}$. For generic choices of $\varphi_{ij}$, this operator is fundamentally \textit{non-onsite}: it cannot be written in the form $u^{\otimes N}$ for some unitary $u$, even after arbitrary enlargement of the unit cell. The operator $\mathcal{U}$ acts on a SIMPS as follows,
\begin{equation}
    \mathcal{U}|\psi\rangle = \sum_{i_1,\dots, i_N} \mathrm{Tr}(\tilde{B}^{i_1i_2}\tilde{B}^{i_2i_3}\cdots \tilde{B}^{i_Ni_1}) |i_1i_2\cdots i_N\rangle.
\end{equation}
where we have defined $\tilde{B}^{ij} = e^{i\varphi_{ij}}B^{ij}$. This equation is non-trivial: It shows that the non-onsite symmetry $\mathcal{U}$ acts on a SIMPS by transforming each tensor without changing the bond dimension. This is possible since the split-index structure effectively splits the different gates $U$ across different tensors.
Because of this, if $\mathcal{U}|\psi\rangle=|\psi\rangle$ for all system sizes and the SIMPS is normal, then Theorem \ref{theorem:fundamental} says that there must exist invertible matrices $V_i$ such that,
\begin{equation} \label{eq:qc_fund_thm}
    e^{i\varphi_{ij}}B^{ij} = V_iB^{ij}V_{j}^{-1}\ .
\end{equation}
Note that this was possible because we assumed that $\mathcal{U}$ is diagonal. More general symmetries can also correspond to SIMPS gauge transformations as long as the off-diagonal parts are a product of onsite permutation operators (\textit{e.g.} $\prod_i X_i$) since such operators can also be split to act on each tensor independently.  
Therefore, we can understand certain non-onsite symmetries of SIMPS in terms of simple gauge transformations. We discuss this class of symmetries in more detail in Sec.~\ref{sec:anomalous}.

Non-onsite symmetries can also be understood with MPS \cite{Chen2011a}, but the treatment is more complicated. Since acting with a non-onsite symmetry increases the bond dimension of an MPS, the symmetry cannot be understood in terms of simple gauge transformations, and instead one needs to consider special three-index reduction tensors that reduce the bond dimension and satisfy certain pulling-through conditions \cite{Chen2011a,GarreRubio2023}. Thus, SIMPS significantly simplify the treatment of certain non-onsite symmetries compared to MPS. 

Using Eq.~\eqref{eq:qc_fund_thm}, we can prove that every MPS which is invariant under a diagonal non-onsite symmetry makes a good SIMPS,
\begin{Prop} \label{theorem:nononsite_good}
    Suppose a normal MPS $A^i$ is invariant under a symmetry of the form $\mathcal{U}=\prod_{i=1}^N U_{i,i+1}$ where $U_{i,i+1}$ is diagonal and is not a product of single-site operators. Then, at least one of the matrices $A^i$ must be not full rank.
\end{Prop}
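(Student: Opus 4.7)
The plan is to prove the contrapositive: assume every matrix $A^i$ is full rank and derive that $\mathcal{U}$ must be a product of single-site operators, contradicting the hypothesis. The first step is to apply the MPS-to-SIMPS conversion of Eq.~\eqref{eq:mpstosimps}. Since each $A^i$ is a full-rank $D\times D$ matrix, every rank $\chi^i$ equals $D$ and the isometries $P^i$ collapse to the identity, so the resulting SIMPS tensor is simply $B^{ij}=A^j$ with uniform bond dimension $\chi = D$. By Proposition~\ref{prop:conversion}, this SIMPS inherits normality from the MPS.

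Next I apply the symmetry. Because $\mathcal{U}|\psi\rangle=|\psi\rangle$ and the SIMPS is normal, Theorem~\ref{theorem:fundamental} (in the form of Eq.~\eqref{eq:qc_fund_thm}) supplies invertible matrices $V_i$ obeying
\begin{equation*}
e^{i\varphi_{ij}} A^j = V_i A^j V_j^{-1}, \qquad \forall\, i,j.
\end{equation*}
Using invertibility of $A^j$, I rearrange this to $V_i = e^{i\varphi_{ij}} A^j V_j (A^j)^{-1}$.

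The last step extracts a constraint on the phases. The left-hand side of this expression does not depend on $j$, so equating the right-hand sides for two choices $j$ and $j'$ and reorganising gives
\begin{equation*}
A^{j'} V_{j'} (A^{j'})^{-1} = e^{i(\varphi_{ij}-\varphi_{ij'})}\, A^j V_j (A^j)^{-1}.
\end{equation*}
Since the left-hand side is now manifestly independent of $i$, the scalar phase $e^{i(\varphi_{ij}-\varphi_{ij'})}$ must also be independent of $i$. Hence $\varphi_{ij}$ splits as $\varphi_{ij}=\alpha_i+\beta_j$, which means $U = u\otimes v$ with $u=\sum_i e^{i\alpha_i}|i\rangle\langle i|$ and $v=\sum_j e^{i\beta_j}|j\rangle\langle j|$, contradicting the assumption that $U$ is not a product of single-site operators.

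The step I expect to require the most care is isolating the scalar phase in the last equation: the comparison between two different $j$'s has to be arranged so that the matrix factors $A^j V_j (A^j)^{-1}$ cancel cleanly, leaving only a condition on $\varphi_{ij}$. A minor technical point is confirming that full-rank square $A^i$ are actually invertible (immediate) so that $(A^j)^{-1}$ is well-defined throughout. Once these are handled, the equivalence between separability $\varphi_{ij}=\alpha_i+\beta_j$ and onsite-ness of $\mathcal{U}$ follows directly from the diagonal structure of $U$.
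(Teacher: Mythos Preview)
Your proof is correct and follows essentially the same contrapositive route as the paper: convert to the SIMPS $B^{ij}=A^j$, invoke the fundamental theorem to get $e^{i\varphi_{ij}}A^j=V_iA^jV_j^{-1}$, then use invertibility of the $A^j$ to force $e^{i\varphi_{ij}}$ to separate. The only cosmetic difference is that the paper compares two values of $i$ (obtaining that $e^{i(\varphi_{ij}-\varphi_{i'j})}$ is independent of $j$) while you compare two values of $j$; both yield $\varphi_{ij}=\alpha_i+\beta_j$ immediately.
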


\begin{proof}
    We prove the contrapositive. Assume that every matrix $A^i$ has full rank. Then, according to Eq.~\eqref{eq:mpstosimps}, the corresponding SIMPS tensor can be chosen as $B^{ij}=A^j$. Now, suppose that the MPS is invariant under $\mathcal{U}$ such that Eq.~\eqref{eq:qc_fund_thm} holds,
\begin{equation}
    e^{i\varphi_{ij}}A^j = V_iA^jV_j^{-1}.
\end{equation}
Using the fact that $A^i$ are all invertible, we can write $e^{i\varphi_{ij}} = V_iA^jV_j^{-1}{A^{j}}^{-1}$, which gives,
\begin{equation}
    e^{i\varphi_{i j}}e^{-i\varphi_{i' j}} = V_{i}V_{i'}^{-1}.
\end{equation}
Since the right-hand side is independent of $j$, the combination of phases on the left-hand side must also be independent of $j$. Then, it we write $e^{i\varphi_{ij}}=(e^{i\varphi_{ij}}e^{-i\varphi_{0j}})(e^{i\varphi_{0j}})$, we see that the phase in the first parentheses depends only on $i$ and the phase while the second parentheses only depends on $j$. Therefore, the two-site operator $U$ with $U|ij\rangle = e^{i\varphi_{ij}}|ij\rangle$ is a product of single-site operators.
\end{proof}

This result can be used to rule out non-onsite symmetries of MPS. For example, since the cluster state MPS has full-rank in the $X$-basis, we know that there are no non-onsite symmetries which are diagonal in this basis.

Let us study non-onsite symmetries using the ``nice SIMPS'' example defined in Eq.~\eqref{eq:simps_nice_ex}. By inspection of Eq.~\eqref{eq:simps_nice_ex}, we can immediately derive the following symmetries of the SIMPS tensor,
\begin{equation} \label{eq:nice_symmetry}
    \begin{aligned}
        &XB^{ij}X = (-1)^{ij} B^{ij} \\
        &ZB^{ij}Z = (-1)^i B^{ij}
    \end{aligned}
\end{equation}
These symmetries of the SIMPS tensor imply global symmetries of the SIMPS. Namely, we find that,
\begin{equation}
    \overline{CZ}:=\prod_{i=1}^N CZ_{i,i+1}\quad \text{and} \quad \overline{Z}:=\prod_{i=1}^N Z_i
\end{equation} 
are both symmetries of the SIMPS. The former is a non-onsite symmetry, while the latter is onsite, and together they form a $Z_2\times Z_2$ symmetry group. Later, we will define a large class of SIMPS that have a pair of symmetries that are generically both non-onsite.

In this example, the non-onsite symmetries could be identified immediately once the SIMPS representation of the states is written down. Conversely, it is not at all obvious to see these symmetries from the MPS representation in Eq.~\eqref{eq:mps_nice_ex}. This is another advantageous property of SIMPS when it comes to analyzing non-onsite symmetries.

The relations in Eq.~\eqref{eq:nice_symmetry} suggest that the SIMPS has non-trivial SPT order under the $Z_2\times Z_2$ symmetry group. While the symmetry operators commute, their representations in the virtual space of the SIMPS anticommute, and this was used to demonstrate non-trivial SPT order for the cluster state in Eq.~\eqref{eq:cluster_mps_symmetries}. Demonstrating and characterizing this SPT order is the goal of Sec.~\ref{sec:sptnononsite}.

\subsection{Anomalous symmetries} \label{sec:anomalous}

Non-onsite symmetries are often associated with anomalies, and here we show that SIMPS are well-suited to capturing anomalous symmetries. In the present context, an anomalous symmetry can be described as a symmetry which does not permit a short-range entangled symmetric state \cite{Chen2011a}. Given a group of non-onsite symmetries, one can calculate an invariant called a 3-cocycle \cite{Chen2011a,Else2014}. If this 3-cocycle belongs to a non-trivial equivalence class, then it can be shown that no injective MPS can be symmetric under this group \cite{Chen2011a}. The alternatives are (a) the MPS is non-injective, meaning it has long-range order \text{i.e.}, symmetry breaking, or (b) the state is not representable as an MPS with finite bond dimension, such as critical states of gapless Hamiltonians. 

% The non-onsite symmetry groups $G_\ab$ defined above are not anomalous in this sense; one can compute that the associated 3-cocycles are trivial since the symmetries are diagonal. Indeed, the example SIMPS $|\psi_\ab\rangle$ discussed above are symmetric and correspond to injective MPS for appropriate choices of $\ab$. 

To demonstrate the ability of SIMPS to capture anomalous symmetries, we now write down a representative set of non-onsite symmetries for every finite group $G$ and 3-cocycle $\omega$ such that the symmetries can be captured by SIMPS gauge transformations. 
% The non-onsite symmetry global symmetry can be understood in terms of SIMPS gauge transformations. 
% It turns out that \textit{all} classes of anomalous group-like symmetries can be captured by SIMPS gauge transformations. That is, for every finite group $G$ and 3-cocycle $\omega$, we can write down a representative set of non-onsite symmetries characterized by these data that can be captured by SIMPS gauge transformations. 
The representations, described in Ref.~\cite{GarreRubio2023}, are defined with respect to an onsite Hilbert space $\mathbb{C}^{|G|}$ spanned by the states $|g\rangle$ for $g\in G$. Define the operators $L^g$ such that $L^g|h\rangle = |gh\rangle$ and the diagonal two-site operators $W^g$ such that $W^g|hk\rangle = \omega(g,k,k^{-1}h)|hk\rangle$. Then, the non-onsite symmetries are defined for each $g\in G$ as,
\begin{equation}
    T^g = \prod_i L^g_i \prod_i W^g_{i,i+1}\ .
\end{equation}
One can check that these non-onsite operators, as a representation of $G$, are characterized by the 3-cocycle $\omega$ \cite{GarreRubio2023}.
This operator consists of a product of diagonal two-site operators and single-site permutations, and therefore it can be captured via SIMPS gauge transformations. Specifically, acting with this symmetry on a SIMPS transforms the tensor in the following way,
\begin{equation}
    B^{h,k}\xrightarrow[]{T^g} \omega(g,k,k^{-1}h) B^{gh,gk}\ .
\end{equation}
If a normal SIMPS is invariant under this symmetry, the fundamental theorem \eqref{eq:fundamental} says that there exist invertible matrices $V^g_h$ such that,
\begin{equation}
    \omega(g,k,k^{-1}h) B^{gh,gk} = V^g_h B^{h,k} {V^g_k}^{-1}\ .
\end{equation}
Strictly speaking, Ref.~\cite{Chen2011a} proved that no normal MPS (and therefore no normal SIMPS), can be invariant under all $T^g$ when $\omega$ is non-trivial. However, we expect that the fundamental theorem of SIMPS can be extended to non-normal SIMPS, as is the case for MPS (see Theorem IV.4 of Ref.~\cite{Cirac2021}, for example).
This is the case for the SIMPS in Eq.~\eqref{eq:simps_anom_ex} which is symmetric under the anomalous $Z_2$ symmetry $U_{\mathit{CZX}}$ and is accordingly not normal. Nevertheless, the symmetry acts on the SIMPS tensor as,
\begin{equation}
    B^{i,j}\xrightarrow[]{U_{CZX}} (-1)^{i+ij}B^{i+1,j+1}\ .
\end{equation}
which is indeed equivalent to a gauge transformation for the tensor defined in Eq.~\eqref{eq:simps_anom_ex},
\begin{equation}
    (-1)^{i+ij}B^{i+1,j+1} = (Z^i X) B^{i,j}(Z^j X)^{-1}\ .
\end{equation}
Therefore, we expect that SIMPS will be a useful tool for studying anomalous symmetries in future.

\subsection{SPT order with non-onsite symmetries} \label{sec:sptnononsite}

In this section, we construct a large family of SIMPS that serve as fixed-point states for SPT phases with (non-anomalous) non-onsite symmetries. We then characterize these SIMPS in terms of string order parameters, response to flux insertion, entanglement spectrum signatures, and by characterizing the universal patterns of entanglement which we call ``universal fingerprints''.

Define the following family of SIMPS with onsite dimension $d$ and bond dimension $\chi=2$,
\begin{equation} \label{eq:fp_simps}
B_\ab^{ij}=X^{a_{ij}}Z^{b_{ij}},
\end{equation}
for some $d\times d$ binary matrices $\bf{a}$, $\bf{b}$. We call states generated by these SIMPS tensors $|\psi_\ab\rangle$. This form includes the examples in Eq.~\eqref{eq:simps_nice_ex} and \eqref{eq:simps_anom_ex}. More generally, we could also consider $\chi>2$ by replacing $X$ and $Z$ with generalized Pauli operators, but we focus on the case of $\chi=2$ for simplicity. The matrices $B_\ab^{ij}$ satisfy the relations,
\begin{equation} \label{eq:fp_simps_symms_1}
    \begin{aligned}
        &(-1)^{b_{ij}}B_\ab^{ij} = XB_\ab^{ij}X \\
        &(-1)^{a_{ij}}B_\ab^{ij} = ZB_\ab^{ij}Z\ .
    \end{aligned}
\end{equation}
These relations imply non-onsite symmetries of $|\psi_\ab\rangle$ of the form,
\begin{equation} \label{eq:ab_symms}
    U^{\bf{a}} := \prod_i u^{\bf{a}}_{i,i+1}\quad\text{and}\quad U^{\bf{b}} := \prod_i u^{\bf{b}}_{i,i+1}\ ,
\end{equation} 
where $u^{\bf{a}}|ij\rangle = (-1)^{a_{ij}}|ij\rangle$ and $u^{\bf{b}}|ij\rangle = (-1)^{b_{ij}}|ij\rangle$. These symmetries form a $Z_2\times Z_2$ symmetry group which we call $G_\ab$. If $\bf{a}$ cannot be factorized, meaning $a_{ij}\neq c_i d_j$ for some binary vectors $\bf{c},\bf{d}$, then $U^{\bf{a}}$ is a non-onsite symmetry (and similar for $\bf{b}$). Eq.~\eqref{eq:fp_simps_symms_1} shows that the two generators anticommute in the virtual space of the SIMPS, so the states $|\psi_\ab\rangle$ will be representatives of non-trivial SPT phases protected by $G_\ab$. In the rest of this section, we only consider choices of $\ab$ such that $B_\ab^{ij}$ define normal SIMPS, which includes Eq.~\eqref{eq:simps_nice_ex} but not Eq.~\eqref{eq:simps_anom_ex}. One can show that this implies that $U^{\bf{a}}$ and $U^{\bf{b}}$ form a faithful representation of $Z_2\times Z_2$ for sufficiently large chain lengths (see Appendix \ref{app:fingerprints}).

\subsubsection{String order parameters}

The states $|\psi_\ab\rangle$ all possess non-trivial string order. To see this, observe that Eq.~\eqref{eq:fp_simps_symms_1} implies that applying $u^{\bf{a}}$ ($u^{\bf{b}}$) to a finite segment of the chain is equivalent to inserting virtual $Z$ ($X$) operators in the virtual space of the SIMPS and the endpoints of that segment. Since we assume that the SIMPS is normal, it corresponds to a normal MPS. Then, one can always find an operator acting on $L_0$ physical spins that has the same effect on the SIMPS as inserting these virtual Paulis. One could also derive this physical operator by using the map $B_L$ defined in Eq.~\eqref{eq:binv}.
Doing this, we can define the string operators,
\begin{equation} \label{eq:sop}
    S^{{\bf{a}/\bf{b}}}(\ell,r) = O^{\bf{a}/\bf{b}}_\ell \left(\prod_{i=\ell}^{r-1} u^{\bf{a}/\bf{b}}_{i,i+1}\right) O^{\bf{a}/\bf{b}}_r
\end{equation}
where $\ell$ and $r$ specify the locations of the left and right ends of the string and $O^{\bf{a}}_{\ell/r}$ ($O^{\bf{b}}_{\ell/r}$) are the aforementioned operators that act in a finite region around sites $\ell/r$ and have the effect of inserting $Z$ ($X$) in the virtual space of the SIMPS. By construction, we have $\langle \psi_\ab|S^{{\bf{a}/\bf{b}}}(\ell,r)|\psi_\ab\rangle=1$, \text{i.e.}, perfect string order.

The operators $O^{\bf{a}}_{\ell/r}$ ($O^{\bf{b}}_{\ell/r}$) carry a $-1$ charge under $U^{\bf{b}}$ ($U^{\bf{a}}$), which demonstrates the non-triviality of the string order. Because of this, we can use the same argument as in Ref.~\cite{Huang2015} to show that no finite-depth quantum circuit whose gates commute with $G_\ab$ can map $|\psi_\ab\rangle$ to a symmetric product state.\footnote{Since the symmetry group $G_\ab$ is not anomalous, there do exist symmetric product states such as $|00\cdots 0\rangle$.} This is a direct proof of the non-trivial SPT order of these states.

Finally, pushing the string order parameters to the ends of a state with open boundary conditions shows that the edges transform under a projective representation of $Z_2\times Z_2$, which demonstrates that the edge spectrum must be twofold degenerate, as in conventional SPT phases with onsite symmetries.

\subsubsection{Entanglement spectrum}

An important signature of 1D SPT order with onsite symmetries is half-chain entanglement spectrum degeneracy \cite{Pollmann2010}. We can divide a chain of length $N$ into two parts, $A$ and $B$, containing sites $1,\dots, j$ and $j+1,\dots,N$ respectively. The entanglement spectrum is then defined as the (logarithm of the) spectrum of $\rho_A = \mathrm{Tr}_B|\psi\rangle\langle \psi|$. For states possessing 1D SPT order with onsite symmetries, there will be an exact degeneracy in this spectrum, where the magnitude of the degeneracy depends on the particular symmetry and SPT phase \cite{Pollmann2010}. For example, for all states in the same SPT phase as the 1D cluster state, the eigenvalues of $\rho_A$ are all at least twofold degenerate. This is a manifestation of the bulk-boundary correspondence \cite{Li2008}, as the entanglement spectrum degeneracy reflects the boundary degeneracy of SPT phases.

For SPT phases with non-onsite symmetries, entanglement spectrum degeneracy is no longer present in general. Indeed, the example in Eq.~\eqref{eq:simps_nice_ex} has MPS bond dimension $D=3$, meaning its entanglement spectrum has three non-zero eigenvalues, so it cannot possibly have a two-fold degenerate entanglement spectrum. Indeed, the entanglement spectrum is $(1/2,1/4,1/4)$, and the degeneracy of second and third eigenvalues is not protected by symmetry. This lack of degeneracy can be attributed to the non-onsite nature of the symmetries. Roughly speaking, when we ``cut'' the chain to compute the entanglement spectrum, we also need to cut the symmetries themselves (which is not needed for onsite symmetries) and this ruins the bulk-boundary correspondence.

There is, however, a hidden entanglement spectrum degeneracy which can be revealed using the SIMPS representation. Consider a SIMPS with OBC \eqref{eq:simps_obc} and project the spin at site $k$ onto the state $|s\rangle$ (by, \text{e.g.}, measuring it). This results in the state $|s\rangle_k \otimes |\psi_s\rangle$ where,
\begin{equation*} 
\begin{aligned} \label{eq:simps_s}
    |\psi^s\rangle = \sum_{\substack{i_1,\dots, i_N \\ \alpha,\beta}} &\langle \alpha |B^{i_1i_2}\cdots B^{i_{k-1}s}B^{s i_{k+1}}\cdots B^{i_{N-1}i_N}|\beta \rangle \\ &\times |\alpha,i_1i_2\cdots i_{k-1}i_{k+1}\cdots i_N,\beta\rangle\ ,
\end{aligned}
\end{equation*}
where the sum is over all indices except $i_k$ which has been fixed to $i_k=s$. Cutting the state between sites $k-1$ and $k+1$ gives,
\begin{equation}
    |\psi^s\rangle = \sum_{\gamma=1}^\chi |L^s_\gamma\rangle |R^s_\gamma\rangle\ ,
\end{equation}
where,
\begin{equation*}
\begin{aligned}
    L^s_\gamma &= \sum_{\substack{i_1,\dots, i_{k-1} \\ \alpha}} \langle \alpha|B^{i_1i_2}\cdots B^{i_{k-1}s}|\gamma\rangle |\alpha,i_1\cdots i_{k-1}\rangle  \\
    R^s_\gamma &= \sum_{\substack{i_1,\dots, i_{k-1} \\ \beta}} \langle \gamma|B^{si_{k+1}}\cdots B^{i_{N-1}i_{N}}|\beta\rangle |i_{k+1}\cdots i_N,\beta\rangle
\end{aligned}
\end{equation*}
The indices $\alpha,\beta,\gamma$ all run from $0$ to $\chi-1$, so there are at most $\chi$ non-zero eigenvalues in the entanglement spectrum when we cut between sites $k-1$ and $k+1$. Furthermore, since the SIMPS virtual space (where the index $\gamma$ lives) transforms under a projective representation of $Z_2\times Z_2$ \eqref{eq:fp_simps_symms_1}, one can argue in the standard way that the spectrum must be twofold degenerate. This works because fixing site $k$ in the state $|s\rangle$ effectively disentangles the non-onsite symmetry, such that the symmetry factorizes between the left and right halves of $|\psi^s\rangle$, and the entanglement spectrum degeneracy is recovered.

\subsubsection{Symmetry twists and flux insertion}
\label{sec:twists}

SIMPS also give a very simple way to introduce symmetry twists corresponding to non-onsite symmetries.
In an MPS, a symmetry twist corresponding to the onsite symmetry $u^{\otimes N}$ is defined by inserting a symmetry flux $V$ into the MPS virtual space,
\begin{equation}
    |\tilde{\psi}\rangle = \sum_{i_1,\dots, i_N} \mathrm{Tr}(V A^{i_1}A^{i_2}\cdots A^{i_N}) |i_1i_2\cdots i_N\rangle
\end{equation}
where $V$ is the virtual counterpart of $u$ as defined in Eq.~\eqref{eq:mps_symm}. For non-onsite symmetries, there is no simple counterpart to $V$ to insert in an MPS, so it is not clear how to insert a symmetry twist. 

For SIMPS, however, we can easily introduce a twist corresponding to a non-onsite symmetry $\mathcal{U}$ by writing,
\begin{equation}
    |\tilde{\psi}\rangle = \sum_{i_1,\dots, i_N} \mathrm{Tr}(V^{i_1} B^{i_1i_2}B^{i_2i_3}\cdots B^{i_Ni_1}) |i_1i_2\cdots i_N\rangle.
\end{equation}
where $V^{i}$ is as defined in Eq.~\eqref{eq:qc_fund_thm}. Recall that, for the states $|\psi_\ab\rangle$, $V^i$ is independent of $i$. 

In SPT phases with an onsite abelian symmetry $G$, a symmetry twisted state with $V=V_g$ carries charge $\Omega(g,h)$ under global $h$ symmetry \cite{Shiozaki2017}. Since the charges of all twisted states determines $\Omega(g,h)$, this is an equivalent way to characterize SPT order.
The same thing works for the SIMPS corresponding to the states $|\psi_\ab\rangle$, where the state twisted by $U^{\bf{a}}$ carries a $-1$ charge under $U^{\bf{b}}$, and vice-versa.

\subsubsection{Universal fingerprints} \label{sec:fingerprints}

SPT phases are characterized by certain universal patterns of entanglement that cannot be removed by finite-depth circuits that respect the protecting symmetry \cite{Chen2013}. The framework of MPS can be used to precisely reveal these patterns, which we refer to as the \textit{universal fingerprints} of an SPT phase. Here, we review how this works for MPS and conventional SPT phases, and then we derive the corresponding result for non-onsite symmetries using SIMPS.

Consider an arbitrary state $|\phi\rangle$ in the same SPT phase as the cluster state $|C\rangle$, which we refer to as the cluster phase. It was shown in Ref.~\cite{Else2012} that $|\phi\rangle$ can be represented by an MPS of the following form,
\begin{equation} \label{eq:else}
    A_\phi^i = J_\phi^i\otimes C_2^i
\end{equation}
where $i=++,+-,-+,--$, $C_2$ is the cluster state tensor \eqref{eq:cluster_mps_X_2site}, and $J_\phi^i$ are arbitrary matrices encoding the microscopic details of $|\phi\rangle$. This equation implies a tensor product decomposition of the virtual space into the ``junk subsystem'' and ``logical subsystem'', which are acted on by $J_\phi$ and $C_2$, respectively. Note that this decomposition only holds for a single choice of basis of the physical spins, which is the basis that diagonalizes the symmetry operators. Alternative proofs of this fact can be found in Refs.~\cite{Lake2022, phdthesis}. This result shows that the ``fingerprint'' of the cluster state, as conveyed by the tensor $C_2^i$, is present in every state within the cluster phase. Universal fingerprint results are very powerful, having been used to understand hidden symmetry breaking \cite{Else2013} and the entanglement structure \cite{Stephen2019a,deGroot2020} in SPT phases. They are also the basis for schemes of MBQC using SPT phases as resources \cite{Else2012,Miller2015,Stephen2017,Raussendorf2019,Stephen2019,Daniel2019,Devakul2018mbqc} and have been used to construct exact renormalization circuits for SPT phases \cite{Lake2022}. Similar results have been obtained for other kinds of SPT phases, including 1D phases with spatially modulated symmetries \cite{Stephen2019} and 2D phases with subsystem symmetries \cite{Raussendorf2019, Stephen2019, Devakul2018mbqc, Daniel2019}

Now, we show how to uncover the universal fingerprints of SPT phases with non-onsite symmetries using SIMPS. 
Consider an arbitrary state belonging to the same SPT phase as $|\psi_\ab\rangle$. This phase is defined as the set of all states that can be related to $|\psi_\ab\rangle$ by a finite-depth quantum circuit consisting of gates that commute with $G_\ab$. In Appendix \ref{app:fingerprints}, we prove the following result,
\begin{Lemma} \label{lemma:fingerprints}
    Let $O$ be any operator supported on the interval $[x,y]$ that commutes with $G_\ab$. Then there exists another operator $O'$ supported on $[x,y]$ that is diagonal in the local $Z$-basis and satisfies $O|\psi_\ab\rangle=O'|\psi_\ab\rangle$. 
\end{Lemma}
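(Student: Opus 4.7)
The plan is to leverage the SIMPS representation of $|\psi_\ab\rangle$ to convert the commutation $[O, G_\ab]=0$ into an algebraic constraint on the Pauli-valued block tensors, and then show that this constraint reduces the action of $O$ on the state to that of a diagonal operator. I would begin by decomposing any physical configuration as $\vec{i} = (\vec{k}, \vec{m})$ with $\vec{k}$ the interval values on $[x,y]$ and $\vec{m}$ the outside values, so that the amplitude factors as $c(\vec{i}) = \mathrm{Tr}[Q(\vec{m})\, \mathcal{B}^{\vec{k}}[m_{x-1}, m_{y+1}]]$, where $\mathcal{B}^{\vec{k}}[m_L, m_R] := B^{m_L k_x} B^{k_x k_{x+1}} \cdots B^{k_{y-1} k_y} B^{k_y m_R}$ is the product of all SIMPS tensors that touch an interval index (including the two boundary tensors involving $m_L = m_{x-1}$ and $m_R = m_{y+1}$), and $Q(\vec{m})$ is the remaining outside block. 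Acting with $O$ sends $\mathcal{B}^{\vec{k}}$ to $\tilde{\mathcal{B}}^{\vec{k}}[m_L, m_R] = \sum_{\vec{\ell}} O_{\vec{k}\vec{\ell}}\, \mathcal{B}^{\vec{\ell}}[m_L, m_R]$. Because the SIMPS is normal, $Q(\vec{m})$ spans the entire space of $2\times 2$ matrices as $\vec{m}$ varies, so the lemma reduces to showing $\tilde{\mathcal{B}}^{\vec{k}}[m_L, m_R] = f(\vec{k})\, \mathcal{B}^{\vec{k}}[m_L, m_R]$ with the scalar $f(\vec{k})$ independent of $m_L, m_R$; the desired operator is then $O' = \sum_{\vec{k}} f(\vec{k})\, |\vec{k}\rangle\langle\vec{k}|$.

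Next I would extract constraints on $O$ from commutation with the symmetry. Splitting $U^{\bf{a}} = U^{\bf{a}}_{\mathrm{out}} U^{\bf{a}}_{\mathrm{near}}$ with $U^{\bf{a}}_{\mathrm{near}} = \prod_{k=x-1}^{y} u^{\bf{a}}_{k,k+1}$, the outside factor commutes trivially with $O$, so for each fixed boundary pair $(m_L, m_R)$ one obtains $[O, D^{\bf{a}}(m_L, m_R)] = 0$, where $D^{\bf{a}}(m_L, m_R)$ is the diagonal on-interval operator with eigenvalue $(-1)^{A(\vec{k}, m_L, m_R)}$ on $|\vec{k}\rangle$, and $A$ is exactly the $X$-exponent in the Pauli factorization $\mathcal{B}^{\vec{k}}[m_L, m_R] = (-1)^{\sigma} X^{A} Z^{B}$; this identification of the symmetry eigenvalues with the Pauli exponents is the crucial observation enabled by the form $B_\ab^{ij}=X^{a_{ij}}Z^{b_{ij}}$. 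A parallel argument for $U^{\bf{b}}$ constrains the $Z$-exponent $B$. Together they imply that $O_{\vec{k}\vec{\ell}} \neq 0$ forces $A(\vec{k}, m_L, m_R) = A(\vec{\ell}, m_L, m_R)$ and $B(\vec{k}, m_L, m_R) = B(\vec{\ell}, m_L, m_R)$ for all $m_L, m_R$. Varying $m_L, m_R$ separately in these identities---together with the non-onsite nature of the protecting symmetries, which prevents the columns of $\bf{a}$ or $\bf{b}$ indexed by differing endpoints from being related by a constant shift---forces $k_x = \ell_x$ and $k_y = \ell_y$.

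Once the endpoints are pinned, the boundary tensors $B^{m_L k_x}$ and $B^{k_y m_R}$ are identical for $\vec{k}$ and $\vec{\ell}$, so the ratio $\mathcal{B}^{\vec{k}}[m_L, m_R]\,(\mathcal{B}^{\vec{\ell}}[m_L, m_R])^{-1}$ collapses to $B^{m_L k_x} \bigl(M(\vec{k}) M(\vec{\ell})^{-1}\bigr) (B^{m_L k_x})^{-1}$, where $M(\vec{k}) = B^{k_x k_{x+1}} \cdots B^{k_{y-1} k_y}$ is the purely interior Pauli product. By the same exponent-matching constraint, $M(\vec{k}) M(\vec{\ell})^{-1} = \pm I$ is a scalar, and conjugation by the boundary Pauli preserves scalars, so the sign $s_{\vec{k},\vec{\ell}}$ does not depend on $m_L, m_R$. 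Setting $f(\vec{k}) := \sum_{\vec{\ell}} O_{\vec{k}\vec{\ell}}\, s_{\vec{k},\vec{\ell}}$ completes the construction. I expect the main obstacle to be the second paragraph: carefully demonstrating that the $(m_L, m_R)$-parameterized family of commutation constraints rigidifies the endpoint indices of $\vec{k}$, which is precisely where the non-onsite nature of the protecting symmetries enters in an essential way.
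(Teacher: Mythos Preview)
Your strategy is genuinely different from the paper's, and it contains a real gap at exactly the place you flagged.

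\textbf{The gap.} Your endpoint-pinning step asserts that ``the non-onsite nature of the protecting symmetries prevents the columns of $\bf{a}$ or $\bf{b}$ indexed by differing endpoints from being related by a constant shift.'' This is not what non-onsiteness gives you. Non-onsiteness of $U^{\bf{a}}$ means only that $a_{ij}$ cannot be written as $c_i+d_j$, i.e.\ that \emph{not all} pairs of columns differ by a constant; it does not exclude a particular pair $k_x\neq\ell_x$ from doing so. In fact the lemma is stated for all normal $B_{\ab}^{ij}$, with no non-onsiteness hypothesis at all, and the ``nice SIMPS'' example already has $a_{ij}=i$, whose columns are identical. Nor does normality supply the missing column/row condition: if $a_{m,k}-a_{m,\ell}$ and $b_{m,k}-b_{m,\ell}$ are both constant in $m$, then $B^{m,k}=X^{\alpha}B^{m,\ell}Z^{\beta}$ for fixed Paulis, and the spanning condition for $s_2=k$ is equivalent to that for $s_2=\ell$, so normality is unaffected. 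Without endpoint pinning, your sign $s_{\vec{k},\vec{\ell}}$ can genuinely depend on $m_L,m_R$ through cross-terms like $b_{m_L,\ell_x}\alpha'+\beta\,a_{\ell_y,m_R}$, and the construction of $f(\vec{k})$ breaks.

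\textbf{What the paper does instead.} The paper sidesteps the boundary issue by working globally on the periodic state. It decomposes $O=\sum_{\vec{j}}\mathcal{D}_{\vec{j}}\mathcal{X}_{\vec{j}}$ into diagonal-times-shift pieces; each $\mathcal{X}_{\vec{j}}$ acts on $|\psi_{\ab}\rangle$ by inserting a single virtual Pauli $P_{\vec{i},\vec{j}}$ (via the relations $B^{i',j'}=X^{a_{ij}+a_{i'j'}}B^{ij}Z^{b_{ij}+b_{i'j'}}$) together with diagonal signs absorbed into a new $\mathcal{D}'_{\vec{j}}$. The crucial step is a charge argument: since $O_{\vec{j}}|\psi_{\ab}\rangle$ is $G_{\ab}$-invariant and $U^{\bf{a}},U^{\bf{b}}$ act diagonally, each nonvanishing basis component is a twisted amplitude with neutral charge, which forces $P_{\vec{i},\vec{j}}=I$ wherever $\mathcal{D}'_{\vec{j}}|\vec{i}\rangle\neq 0$. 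The only structural input here is that $U^{\bf{a}},U^{\bf{b}}$ form a faithful $Z_2\times Z_2$, which the paper derives from normality alone. This replaces your attempted control of the interval boundary by a single clean statement about twist charges, and never needs any column-distinctness property of $\bf{a},\bf{b}$.
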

\noindent
From this result, we have,
\begin{Theorem}[Universal Fingerprints] \label{theorem:fingerprints}
    Let $|\phi\rangle$ be a state in the same SPT phase as $|\psi_\ab\rangle$ \eqref{eq:fp_simps} with respect to $G_\ab$. Then $|\phi\rangle$ can be represented by a SIMPS defined by the matrices $B_\phi^{ij} = J_\phi^i\otimes B_\ab^{ij}$ for some matrices $J_\phi^i$, where $B_\ab^{ij}$ is the SIMPS representation of $|\psi_\ab\rangle$.
\end{Theorem}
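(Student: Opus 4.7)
The plan is to use Lemma \ref{lemma:fingerprints} to replace the finite-depth symmetric circuit $U$ (for which $|\phi\rangle=U|\psi_\ab\rangle$, by definition of lying in the same SPT phase as $|\psi_\ab\rangle$) by an operator diagonal in the $Z$-basis, represented as a standard matrix product operator (MPO) of finite bond dimension, and then combine this diagonal MPO directly with the SIMPS tensor $B_\ab^{ij}$. The crucial feature is that a $Z$-diagonal operator has \emph{single-site} physical indices in its MPO form, which is exactly what produces the one-index factor $J_\phi^i$ in the claimed decomposition, as opposed to the two-site structure of $B_\ab^{ij}$.

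The first step is to construct a $Z$-diagonal MPO $\hat{\Lambda}$ with finite bond dimension $D_J$ (controlled by the depth and range of $U$) satisfying $U|\psi_\ab\rangle=\hat{\Lambda}|\psi_\ab\rangle$. The base case of depth one is immediate: $U$ is a tensor product of symmetric local gates on disjoint intervals, each of which Lemma \ref{lemma:fingerprints} replaces by a $Z$-diagonal operator of the same support, so $\hat{\Lambda}$ is a product of local diagonal operators with trivial MPO bond dimension. For higher depth, I would induct layer by layer. The key observation is that every $Z$-diagonal operator commutes with the generators of $G_\ab$ (which are themselves $Z$-diagonal by \eqref{eq:ab_symms}), so after replacing the innermost layer by its diagonal counterpart, the remaining circuit composed with this correction is still locally $G_\ab$-symmetric within each lightcone region. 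Iterating and re-applying the Lemma within each lightcone produces a diagonal MPO whose bond dimension is bounded by the lightcone of $U$.

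Given $\hat{\Lambda}=\sum_{i_1,\dots,i_N}\mathrm{Tr}(N^{i_1}\cdots N^{i_N})\,|i_1\cdots i_N\rangle\langle i_1\cdots i_N|$ in its MPO form with $D_J\times D_J$ matrices $N^i$, the second step is a direct algebraic manipulation. The amplitude of $|i_1\cdots i_N\rangle$ in $\hat{\Lambda}|\psi_\ab\rangle$ factors as
\begin{equation*}
\mathrm{Tr}(N^{i_1}\cdots N^{i_N})\,\mathrm{Tr}(B_\ab^{i_1 i_2}\cdots B_\ab^{i_N i_1})=\mathrm{Tr}\!\left(\prod_{k=1}^N\, N^{i_k}\otimes B_\ab^{i_k i_{k+1}}\right),
\end{equation*}
using $\mathrm{Tr}(A)\mathrm{Tr}(B)=\mathrm{Tr}(A\otimes B)$ together with the mixed-product property $(A_1\otimes B_1)(A_2\otimes B_2)=(A_1A_2)\otimes(B_1B_2)$. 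This is manifestly a SIMPS with tensor $B_\phi^{ij}=N^i\otimes B_\ab^{ij}$, and identifying $J_\phi^i:=N^i$ yields the claim.

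The principal obstacle is the first step: arguing that $\hat{\Lambda}$ can be chosen \emph{local}, i.e., that $D_J$ is bounded independently of system size. The Lemma applied to the global operator $U$ on the whole chain trivially produces a diagonal replacement but with no \emph{a priori} control over its MPO rank. The inductive argument sketched above circumvents this by working one lightcone at a time, but careful bookkeeping is required to ensure that the diagonal corrections from successive layers can be absorbed into a single finite-bond-dimension MPO. A likely cleaner alternative is a blocking argument in which the chain is partitioned into consecutive windows wider than the lightcone of $U$, the Lemma is applied within each window to extract a local diagonal operator, and the nontrivial content becomes showing that these per-window diagonal operators can be glued along overlapping boundary sites into a global MPO with $N$-independent bond dimension.
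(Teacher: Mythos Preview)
Your proposal is correct and follows essentially the same strategy as the paper: use Lemma~\ref{lemma:fingerprints} layer by layer to replace the symmetric finite-depth circuit by a $Z$-diagonal operator that can be written as a finite-bond-dimension MPO, and then absorb this MPO into the SIMPS via the tensor-product identity $\mathrm{Tr}(A)\mathrm{Tr}(B)=\mathrm{Tr}(A\otimes B)$ to obtain $B_\phi^{ij}=J_\phi^i\otimes B_\ab^{ij}$. The paper phrases the layer-by-layer step as ``commuting gates through the lower layers before acting on $|\psi_\ab\rangle$'' rather than your inductive replacement of inner layers first, and it is equally informal about the locality bookkeeping you flag as the principal obstacle, simply asserting that the induced correlations remain local so that $D_\phi$ stays finite.
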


\begin{proof}
    By definition, for any state $|\phi\rangle$ in the same SPT phase as $|\psi_\ab\rangle$, there exists a symmetric finite-depth quantum circuit $\mathcal{U}_\phi$ such that $|\phi\rangle =  \mathcal{U}_\phi|\psi_\ab\rangle$. Here, $\mathcal{U}_\phi = \prod_i u_i$ where each $u_i$ is a unitary gate with finite support such that the gate can be applied in a finite number of layers of non-overlapping gates, and each gate is symmetric, $[U^{\bf{a}},u_i]=[U^{\bf{b}},u_i] = 0$. 
    
    When then apply Lemma \ref{lemma:fingerprints} to the gates $u_i$. For the gates that act directly on $|\psi_\ab\rangle$, \text{i.e.}, the first layer, we can immediately apply Lemma \ref{lemma:fingerprints} to replace them with diagonal operators. For the gates in other layers, we need to first commute the gates through the lower layers before we can act on $|\psi_\ab\rangle$. This introduces some correlations between the different local operators, but these correlations remain local. By replacing each gate $u_i$ with a diagonal operator, we can replace the whole circuit $\mathcal{U}_\phi$ with a diagonal operator $\mathcal{J}_\phi$ such that $|\phi\rangle = \mathcal{J}_\phi|\psi_\ab\rangle$. This operator is no longer a finite-depth circuit in general (or even a unitary operator), but it can nonetheless be represented by a matrix product operator (MPO) of the form,
    \begin{equation}
        \mathcal{J}_\phi = \sum_{i_1,\dots i_N} \mathrm{Tr} (J_\phi^{i_1}\cdots J_\phi^{i_N})|i_1\cdots i_N\rangle\langle i_1\cdots i_N| \,
    \end{equation}
    where $J_\phi^i$ are some matrices of dimension $D_\phi$. Due to the local structure of the correlations in $\mathcal{J_\phi}$, the bond dimension $D_\phi$ is finite in the thermodynamic limit. Finally, we can obtain a SIMPS representation of $|\phi\rangle$ by applying this MPO to $|\psi_\ab\rangle$, resulting in,
    \begin{equation}
    \begin{aligned}
        &\langle i_1 \cdots i_N|\mathcal{J}_\phi|\psi_\ab\rangle\\ &=\mathrm{Tr}(J_\phi^{i_1}\cdots J_\phi^{i_N})\ \mathrm{Tr}(B_\ab^{i_1i_2}\cdots B_\ab^{i_Ni_{1}}) \\
        &=\mathrm{Tr}([J_\phi^{i_1}\otimes B_\ab^{i_1i_2}]\cdots [J_\phi^{i_N}\otimes B_\ab^{i_Ni_{1}}]),
    \end{aligned}
    \end{equation}
    such that we can represent $|\phi\rangle$ as a SIMPS defined by the matrices $B^{ij}_\phi = J^i_\phi\otimes B^{ij}_\ab$.
\end{proof}

Theorem \ref{theorem:fingerprints} implies that all of the interesting properties of the SIMPS observed in the previous sections are robust properties of the corresponding SPT phase. For example, the following relations hold for a SIMPS $B^{ij}_\phi$ in the same SPT phase as $B^{ij}_\ab$,
\begin{equation} \label{eq:gen_simps_symms_1}
    \begin{aligned}
        &(-1)^{b_{ij}}B_{\phi}^{ij} = (\id\otimes X)B_{\phi}^{ij}(\id\otimes X) \\
        &(-1)^{a_{ij}}B_{\phi}^{ij} = (\id\otimes Z)B_{\phi}^{ij}(\id\otimes Z)\ ,
    \end{aligned}
\end{equation}
so that the two generators of $G_\ab$ anticommute in the virtual space of the SIMPS for every state in the same SPT phase. This means that the previous conclusions about string order and (lack of) entanglement spectrum degeneracy hold throughout the entire SPT phase protected by the symmetry $G_\ab$.

We remark that the universal fingerprints can also be expressed using MPS. That is, if we let $A^i_{\ab}$ be the MPS tensor derived from $B^{ij}_{\ab}$, we can also express the state $|\phi\rangle = J_\phi |\psi_{\ab}\rangle$ as an MPS with tensor $A_\phi^i = J_\phi^i\otimes A^i_{\ab}$. Note, however, that the SIMPS representation was essential for proving Theorem \ref{theorem:fingerprints}.

\section{Quantum teleportation and computation with SIMPS} \label{sec:computation}

In this section, we move from the physical properties of SIMPS to their computational properties. We first review how MPS can be used as resources for long-range quantum teleportation, and then we describe how it works for SIMPS. We then move to quantum computation, arriving at the result that some SIMPS belonging to non-trivial SPT phases can be resources for universal MBQC on a single qubit. To the best of our knowledge, this gives the first example of a universal resource that does not have entanglement spectrum degeneracy.

\subsection{Quantum wire in MPS} \label{sec:mps_wire}

We first describe the notion of a \textit{quantum wire}. This is closely related to the \textit{localizable entanglement}, which describes the ability to concentrate entanglement between two distant spins in a many-body system by measuring the others in an optimal local basis \cite{Popp2005}. MPS give a very clear interpretation of localizable entanglement in terms of unitary evolution in the virtual space. Suppose that we measure each of the bulk spins of the MPS with OBC \eqref{eq:mps_obc}. This disentangles the bulk spins from the boundary spins, leaving behind the following state of the boundary spins,
\begin{equation} \label{eq:mps_obc_measured_eq}
    |\bar{\psi}'\rangle = \sum_{a,b}\   \langle a|A^{s_1}\cdots A^{s_N}|b\rangle |a,b\rangle,
\end{equation}
where $s_1,\dots, s_N$ are the measurement outcomes. This state can equivalently be written as,
\begin{equation}
    |\bar{\psi}'\rangle = (\id\otimes A^{s_1}\cdots A^{s_N})|\Psi^+_D\rangle
\end{equation}
where $|\Psi^+_D\rangle = \frac{1}{\sqrt{D}}\sum_{a=0}^{D-1}|aa\rangle$ is the Bell state.
We say that $|\bar{\psi}\rangle$ has \textit{long-range localizable entanglement} (LRLE) if the entanglement between the boundary spins in $|\bar{\psi}'\rangle$ remains finite as $N\rightarrow\infty$ \cite{Popp2005, Wahl2012}.  

One way to get LRLE is to choose the matrices $A^i$ to be unitary for all $i$. Then, the product $A^{s_1}\cdots A^{s_N}$ is also unitary, which implies that $|\bar{\psi}'\rangle$ is maximally entangled. In this case, the state in Eq.~\eqref{eq:mps_obc_measured_eq} can be viewed as a quantum computation in the virtual space. Reading from left to right, we begin with a logical state $|a\rangle$, apply a sequence of unitary operators $A^{s_1\dagger},\dots,A^{s_N\dagger}$, and then ``measure'' the logical state by projecting it onto $|b\rangle$. This unitary evolution picture explains why information can be faithfully transmitted along the chain via measurement \cite{Gross2007}.

The maximally entangled post-measurement state $|\bar{\psi}'\rangle$ depends on the measurement outcomes $s_1,\dots s_N$.
When using this state as a resource to, say, perform quantum teleportation, a correction unitary must be applied to the boundary degrees of freedom to map the state to a fixed reference state such as the Bell state. In this context, the operator $A^{s_1}\cdots A^{s_N}$ is called a \textit{byproduct operator} \cite{Raussendorf2003}. For generic states, the only way to undo the byproduct is to simply apply the unitaries $A^{s_i}$ one by one in reverse order, but this is not efficient in terms of depth, nor in terms of the number of classical bits that need to be communicated (which equals $\sim N \log d$ in this case). Ideally, one can instead perform some classical side processing on the measurement outcomes to determine a simple form of the byproduct operator. More precisely, we ask that only a finite number of classical bits are needed to specify the final byproduct.

If an injective MPS has LRLE and there is a classical side processing algorithm to identify the byproduct operator as described above, we say the MPS is a \textit{quantum wire}, since it can be used to deterministically and efficiently transmit quantum information encoded on one end to the other end using local measurements.

Another perspective on quantum wires comes from string order. As described in Sec.~\ref{sec:mps_spt}, MPS with OBC belonging to non-trivial SPT phases are symmetric under the string order parameters $V_g\otimes u_g^{\otimes N}\otimes V_g^*$ for $g\in G$ where $V_g$ and $u_g$ act on the boundary and bulk spins, respectively. For all abelian $G$, there is a basis that simultaneously diagonalizes $u_g$ for all $g\in G$ such that $u_g=\bigoplus_\alpha \chi^\alpha_g$ where $g\mapsto\chi^\alpha_g$ is a 1D representation of $G$. Suppose we measure all bulk spins in this basis. This measurement determines the symmetry charge $\chi^{\alpha_i}_g$ on every site $i$, and the total charge in the bulk is $\chi_g = \prod_{i=1}^N \chi^{\alpha_i}_g$. After measurement, the boundary spins are in a $+1$ eigenstate of the operators $\chi_g V_g\otimes V_g^*$. For certain projective representations $V_g$, these constraints are enough to uniquely specify the post-measurement state $|\bar{\psi}'\rangle$. Namely, suppose $V_g$ is an irreducible representation and $\Omega$ satisfies the ``maximally non-commutative'' condition \cite{Else2012}, which means $\Omega(g,h)=1$ for all $h$ if and only if $g=e$. Then these constraints imply that $|\bar{\psi}'\rangle = (I \otimes V_{g^*})|\Psi^+_D\rangle$ for a certain $g^*\in G$ that can be determined from $\chi_g$ with classical side processing \cite{Else2012}. Here, $V_{g^*}$ is the byproduct operator. A similar perspective appears in Ref.~\cite{Marvian2017}.

Hence, the string order parameters are crucial to the quantum wire property: they imply that measuring the bulk symmetry operators establishes long-range correlations between the boundaries, and the measured eigenvalues of the symmetry operators can be processed to determine the byproduct operators. This same intuition will hold for SIMPS, except that we will have to consider generalized string order corresponding to non-onsite symmetries.

\vspace{3mm}
\noindent
\textit{Example: the cluster state.} We can demonstrate the above ideas using the cluster state \eqref{eq:cluster_mps_X}.
Measuring all bulk spins in the $X$ basis gives the following state of the boundary spins,
\begin{equation} \label{eq:cluster_obc_measured}
    |\bar{C}'\rangle = \sum_{a,b}\   \langle a|HZ^{s_1}\cdots HZ^{s_N}|b\rangle |a,b\rangle
\end{equation}
where $s_i=0,1$ correspond to measurement outcomes $|+\rangle,|-\rangle$. Using $HZH = X$, the byproduct operator $HZ^{s_1}\cdots HZ^{s_N}$ can be rewritten, up to an unimportant global phase, as $X^{s_1+s_3+\dots + s_{N-1}}Z^{s_2+s_4+\dots + s_N}$, where we assume that $N$ is even. This byproduct operator can be described using two bits corresponding to the parity of measurement outcomes on all even sites and all odd sites. Therefore, the cluster state is a quantum wire.

As described in Sec.~\ref{sec:mps_spt}, the cluster state on open boundaries is a $+1$ eigenstate of the string operators $Z\otimes X_A \otimes Z$ and $X \otimes X_B \otimes X$. Now, suppose we measure all bulk spins in the $X$ basis. Then, the post-measurement state of the boundary spins is a $+1$ eigenstate of the operators $(-1)^a Z\otimes Z$ and $(-1)^bX\otimes X$ where $a=s_1+s_3+\dots + s_{N-1}$ is the measured eigenvalue of $X_A$ and $b=s_2+s_4+\dots + s_N$ is the measured eigenvalue of $X_B$. These relations tell us that the post-measurement state is $(\id\otimes X^aZ^b)|\Psi^+_2\rangle$, which matches the form of the byproduct operators derived from the MPS picture. 

\subsection{Robustness throughout the SPT phase} \label{sec:robustness_wire}

The universal fingerprints of the cluster phase \eqref{eq:else} imply that the quantum wire property is a robust property of the phase, and that the byproduct operators are uniform throughout the phase \cite{Else2012}. To see this, we can a choose a basis for the virtual space that respects the decomposition into the junk and logical subsystems, $\{|a\rangle\}_a = \{|a_j\rangle \otimes |a_\ell \rangle\}_a$. Then, Eq.~\eqref{eq:mps_obc_measured_eq} reads,
\begin{equation} \label{eq:junk_wire_1}
    |\bar{\phi}'\rangle = \sum_{a,b}\   \langle a|A_\phi^{s_1}\cdots A_\phi^{s_N}|b\rangle |a,b\rangle
    = |\bar{\phi}'_j\rangle \otimes |\bar{\phi}'_\ell\rangle,
\end{equation}
where,
\begin{equation} \label{eq:junk_wire_2}
    \begin{aligned}
        &|\bar{\phi}'_j\rangle = \sum_{a_j,b_j} \langle a_j|J_\phi^{s_1}\cdots J_\phi^{s_N}|b_j\rangle |a_j,b_j\rangle \\
        &|\bar{\phi}'_\ell\rangle = \sum_{a_\ell,b_\ell} \langle a_\ell|C_2^{s_1}\cdots C_2^{s_N}|b_\ell\rangle |a_\ell,b_\ell\rangle
    \end{aligned} 
\end{equation}
The state $|\bar{\phi}'_j\rangle$ is determined by the microscopic details of $|\phi\rangle$ and generically has entanglement that goes to 0 as $N\rightarrow\infty$. On the other hand, the state $|\bar{\phi}'_\ell\rangle$ is exactly the same state that results from measuring the cluster state, and is thus maximally entangled. Furthermore, the byproduct operators are the same as in the cluster state. Therefore, we can use the state $|\bar{\phi}\rangle$ as a quantum wire by simply using $|\bar{\phi}'_\ell\rangle$ for teleportation and throwing away the junk state $|\bar{\phi}'_j\rangle$.

\subsection{Quantum wire in SIMPS} \label{sec:simps_wire}

We now explain how quantum wire works in SIMPS. Starting with a SIMPS on OBC \eqref{eq:simps_obc} and measuring the bulk spins, the state of the remaining boundary spins is,
\begin{equation} \label{eq:simps_obc_measured}
    |\bar{\psi}'\rangle = \sum_{\alpha,\beta}\   \langle \alpha|B^{s_1s_2}\cdots B^{s_{N-1}s_N}|\beta\rangle |\alpha,\beta\rangle,
 \end{equation}
where $s_1,\dots, s_N$ are the measurement outcomes. Then, if the SIMPS tensors are unitary, as is the case for the states $|\psi_\ab\rangle$, this state describes a maximally entangled state between the $\chi$-dimensional boundary spins. Sometimes, the MPS tensors describing a given state are not unitary, but the SIMPS tensors will be. In such cases, the SIMPS representation gives a clearer understanding of origin of LRLE. This is discussed further in Sec.~\ref{sec:wahl}.

To study the form of the byproduct operators, we consider the example given in Eq.~\eqref{eq:simps_nice_ex}. If we measure the bulk of the OBC SIMPS described by the tensor in Eq.~\eqref{eq:simps_nice_ex}, the byproduct operator will be,
\begin{equation} \label{eq:simps_byproduct}
    \begin{aligned}
    &B^{s_1s_2}B^{s_2s_3}\cdots B^{s_{N-1}s_N} \\
    &\propto X^{s_1+s_2+\dots + s_{N-1}}Z^{s_1s_2+s_2s_3+\dots + s_{N-1}s_N}.
    \end{aligned}
\end{equation}
So, there is a simple classical side processing that can be performed to determine the byproduct operators. A similar result holds for all of the examples defined in Eq.~\eqref{eq:fp_simps}: the byproduct is some Pauli that can be simply determined using classical side processing. Therefore, these states are examples of quantum wires.

These examples have an important property that is distinct from the MPS examples described in Sec.~\ref{sec:mps_spt}. In the cluster state, and more general quantum wires emerging from SPT phases with onsite symmetries, the classical side processing to determine the byproduct operators is a linear function of the measurement outcomes. Conversely, Eq.~\eqref{eq:simps_byproduct} requires calculating a non-linear boolean function (\text{e.g.}, $s_1s_2$). This is significant, as all known schemes of teleportation (and, more generally, MBQC) have linear classical side processing relations\footnote{Possible exceptions are given by MBQC schemes that involve an initial round of measurements that reduce the resource state to a cluster state, followed by normal cluster state MBQC \cite{Wei2012,Miller2016}. However, since these protocols are non-deterministic, we do not consider them to be counterexamples.}. In fact, this linearity plays a key role in our fundamental understanding of MBQC, such as its relationship to quantum contextuality \cite{Raussendorf2013,Frembs2023}. Therefore, our examples may provide new fundamental insights into more general schemes of MBQC. 

We can also understand quantum wire in SIMPS from the perspective of string order. From Eq.~\eqref{eq:nice_symmetry}, we see that the SIMPS on OBC \eqref{eq:simps_obc} is invariant under,
\begin{equation}
    X\otimes\left( \prod_{i=1}^{N-1} CZ_{i,i+1}\right)\otimes X \quad \text{and} \quad Z\otimes \left(\prod_{i=1}^{N-1} Z_i\right) \otimes Z\ .
\end{equation}
When the bulk spins are measured in the $Z$ basis, the boundary spins are left in a state that is invariant under $(-1)^a X\otimes X$ and $(-1)^bZ\otimes Z$ where $b=s_1s_2+s_2s_3+\dots + s_{N-1}s_N$ is the measured eigenvalue of $\prod_i CZ_{i,i+1}$ and $a=s_1+s_2+\dots + s_{N-1}$ is the measured eigenvalue of $\prod_i Z_i$, implying that the state has the form $(\id\otimes X^bZ^a)|\Psi^+_2\rangle$. We see that the non-linearity of the classical side processing is a result of the non-onsite nature of the symmetries.

Finally, we remark that Theorem \ref{theorem:fingerprints} implies that the quantum wire property of the states $|\psi_\ab\rangle$ is shared by all states belonging to the corresponding SPT phases protected by the non-onsite symmetries, as shown in Sec.~\ref{sec:robustness_wire} for MPS.

\subsection{Measurement-based quantum computation with SIMPS} \label{sec:mbqc}

We have shown that SIMPS can act as quantum wires. In this section, we go one step further and show how they can also be used as resources for MBQC. In MBQC, unitary operators are applied to a logical state as it is teleported via measurement \cite{Raussendorf2003}. In this context, the quantum wire property corresponds to the ability to implement the identity gate in MBQC \cite{Else2012}.

For simplicity, we focus on the case of injective SIMPS with $L_1=1$ (see Definition \ref{def:normalsimps}). The protocol has two steps. First, we measure every odd numbered spin and obtain the measurement outcomes $\vec{s} = s_1,s_3,\dots, s_{N-1}$. The resulting state of the unmeasured even spins will be an MPS of the following form,
\begin{equation}
\begin{aligned}
    &|\psi(s_1,s_3,\dots,s_{N-1})\rangle = \\
    &\sum_{i_2,i_4,\dots,i_N}\mathrm{Tr}(A_{s_1,s_3}^{i_2} \cdots A_{s_{N-1},s_1}^{i_N})|i_2i_4\cdots i_N\rangle
\end{aligned}
\end{equation}
where $A_{s,s'}^i:=B^{si}B^{is'}$ is an injective MPS tensor for all $s,s'$ by assumption of injectivity of $B^{ij}$. After measuring half of the spins, the non-onsite symmetries of the state become ``disentangled'' into onsite symmetries of the unmeasured spins, where the form of these onsite symmetries varies from site to site depending on $\vec{s}$. The resulting MPS has SPT order under these symmetries. Therefore, we can apply existing techniques for using SPT states as resources for MBQC \cite{Else2012,Stephen2017}.

As an explicit example, we consider the following SIMPS with onsite dimension $d=4$,
\begin{equation} \label{eq:mbqc_simps}
    \begin{array}{llll}
        B^{00} = \id & B^{01} = X & B^{02} = \id & B^{03} = X \\[\medskipamount]
        B^{10} = \id & B^{11} = X  & B^{12} = \id & B^{13} = X \\[\medskipamount]
        B^{20} = Z & B^{21} = Y  & B^{22} = Y & B^{23} = Z \\[\medskipamount]
        B^{30} = Z & B^{31} = Y  & B^{32} = Y & B^{33} = Z
    \end{array}\ .
\end{equation}
This corresponds to the state $|\psi_\ab\rangle$ for a specific choice of $\ab$, chosen such that one of the $Z_2$ symmetries is non-onsite and the SIMPS is injective. By injectivity, we know that, for every $s,s'$, the matrices $A_{s,s'}^{i}$ span the space of $2\times 2$ matrices. Since each $A_{s,s'}^{i}$ is a Pauli, we find that for all $s,s'$, and for every Pauli $P=\id,X,Y,Z$, there is an index $i$ such that $A_{s,s'}^{i}=P$. These are the same matrices defining the MPS of the cluster state with a two-qubit unit cell, see Eq.~\eqref{eq:cluster_mps_X_2site}. Therefore, the state obtained after measuring every other site can be viewed as a cluster state with some permutations of basis states applied to every unit cell. Since these permutations can be determined from $\vec{s}$, to which we have direct access, we can undo it and then apply the usual scheme of universal MBQC using cluster states \cite{Raussendorf2003}. Therefore, the SIMPS in Eq.~\eqref{eq:mbqc_simps} is a  resource for universal MBQC on a single logical qubit. 

Remarkably, the SIMPS in Eq.~\eqref{eq:mbqc_simps} does not have a degenerate entanglement spectrum. More specifically, the entanglement spectrum using the techniques of Ref.~\cite{Cirac2011} can be computed to be $\approx(0.43,0.25,0.25,0.07)$, and the degeneracy of the second and third eigenvalues is not symmetry-protected.
To the best of our knowledge, this is the first example of a universal resource for MBQC without entanglement spectrum degeneracy.\footnote{One other example of a quantum wire without entanglement spectrum degeneracy was given in Ref.~\cite{Liu2024}, but this example possesses long-range entanglement, which disqualifies it as a quantum wire by our definition as its preparation is as difficult a task as teleportation itself.} Similar MBQC protocols can be constructed for all states $|\psi_\ab\rangle$, but whether or not we get a universal gate set depends on $\ab$.

\section{Constrained spin chains} \label{sec:constrained}

In this section, we study constrained spin chains which are defined by a Hilbert space that is not a tensor product of onsite Hilbert spaces. We find that SIMPS provide a very simple way to encode these constraints, and show how previous examples in the literature can be cast into a SIMPS language.

\subsection{Variational SIMPS for Rydberg chain}
As a concrete example, we will consider the constrained Rydberg chains \cite{Turner2018} which consist of two states per site, corresponding to the ground and excited states of an atom. The excited state, called a Rydberg state, has a large principal quantum number, and the large spatial extent of the Rydberg state means that neighbouring atoms in the Rydberg state will experience a large energy penalty, a phenomenon known as the Rydberg blockade \cite{Lukin2001}. A good approximation of the physics of these chains is obtained by considering only those states in which there are no neighbouring atoms in the Rydberg state. This corresponds to a constrained Hilbert space that is not a tensor product of local Hilbert spaces on each site. The simplest dynamics in this system are captured by the so-called PXP model \cite{Turner2018}.

We first consider how such a constrained spin chain can be captured in MPS. Referring to the ground and Rydberg states as $|0\rangle$ and $|1\rangle$, the constraint can be implemented by restricting to MPS tensors $A^i$ satisfying 
\begin{equation} \label{eq:mps_rydberg}
A^1A^1=0\ .
\end{equation}
Then, any component of the MPS wavefunction with two neighbouring Rydberg atoms is automatically equal zero. In fact, if the MPS is normal, then the injectivity of the map in Eq.~\eqref{eq:inj_map} implies that this is the only way to enforce the constraint. Notice that Eq.~\eqref{eq:mps_rydberg} implies that $A^1$ is not full rank. This immediately leads to the following result,
\begin{Prop} \label{theorem:rydberg_good}
    If a normal MPS satisfies the Rydberg constraint, then it makes a good SIMPS. A similar conclusion holds for more general local constraints.
\end{Prop}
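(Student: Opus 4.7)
The plan for the Rydberg case is essentially contained in the sentence immediately preceding the proposition: the constraint $A^1 A^1 = 0$ forces $A^1$ to be nilpotent, and a nonzero nilpotent matrix has characteristic polynomial $\lambda^D$ and hence determinant zero, so it fails to be full rank. (The case $A^1 = 0$ is also rank-deficient, trivially.) By the definition of a good SIMPS, exhibiting any single rank-deficient $A^i$ is enough, so this completes the first half.

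For the extension to general local constraints, the plan is to reduce to an algebraic vanishing condition on a product of MPS tensors, and then apply normality. A generic local constraint can be phrased as the exclusion of a finite list of forbidden patterns $(s_1,\dots,s_k)$ on $k$ consecutive sites (after blocking sites together if necessary so that the constraint acts within a single unit of the MPS). The requirement that $|\psi\rangle$ lie in the constrained subspace means that every coefficient $\mathrm{Tr}(A^{i_1}\cdots A^{i_N})$ whose configuration contains a forbidden pattern must vanish. Using cyclicity of the trace, each such condition takes the form
\begin{equation}
    \mathrm{Tr}\bigl(M \cdot A^{s_1} A^{s_2} \cdots A^{s_k}\bigr) = 0
\end{equation}
for every $M$ built as a product of MPS tensors on the remaining $N-k$ sites.

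The next step is to invoke Definition \ref{def:normalmps}: once $N-k \geq L_0$, the matrices $M$ span the full space $M_D$, and non-degeneracy of the trace pairing then forces $A^{s_1} A^{s_2} \cdots A^{s_k} = 0$. Since any product of invertible matrices is invertible, at least one factor $A^{s_j}$ must fail to have full rank, and the MPS makes a good SIMPS.

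The main obstacle I anticipate is a subtle one: when the forbidden pattern wraps around the periodic boundary, or when two overlapping forbidden patterns are imposed simultaneously, the ``environment'' matrices $M$ need not obviously cover all of $M_D$. I expect this can be handled by first blocking enough sites together that the constraint becomes an onsite condition on the blocked tensors, and then applying normality of the blocked MPS (which is inherited from normality of the original, at the cost of enlarging $L_0$). With that reduction in place, the argument above goes through verbatim.
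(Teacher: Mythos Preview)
Your proposal is correct and follows the same line of reasoning as the paper. The paper's own argument is essentially the two sentences preceding the proposition: normality forces $A^{1}A^{1}=0$ (via injectivity of the map \eqref{eq:inj_map}), and this immediately makes $A^1$ rank-deficient; the general-constraint case is only asserted, whereas you spell it out explicitly by using normality to conclude $A^{s_1}\cdots A^{s_k}=0$ and then observing that a vanishing product of square matrices must contain a non-invertible factor.
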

\noindent

By converting to SIMPS, we can implement the constraint in a simpler manner. Namely, we simply restrict to tensors $B^{ij}$ such that 
\begin{equation} \label{eq:simps_rydberg}
    B^{11}=0\ .
\end{equation}
No further constraints are necessary. We can similarly argue that this is the only way to encode the constraint for normal SIMPS, so it is a necessary and sufficient condition.
Unlike Eq.~\eqref{eq:mps_rydberg} which is a \textit{nonlinear} constraint on the entries of the MPS tensor, Eq.~\eqref{eq:simps_rydberg} is a \textit{linear} constraint on the entries of the SIMPS tensor. The simpler implementation of the constraint in SIMPS could be useful in variational algorithms. For MPS, one would need to optimize the tensor $A^i$ while enforcing the nonlinear constraint $A^1A^1=0$. Conversely, for SIMPS, one need only set $B^{11}=0$ and optimize the other components freely.

To explicitly demonstrate the use of SIMPS for constrained spin chains, we reexamine previous works where MPS were used to study Rydberg chains. Namely, Refs.~\cite{Bernien2017,Ho2019,Michailidis2020,Ljubotina2022,Li2023} considered MPS tensors of the form,
\begin{equation}
    A^0 =\begin{pmatrix}
        a & 0  \\
        a & 0  
    \end{pmatrix}
    ,\quad
     A^1 =\begin{pmatrix}
        0 & b  \\
        0 & 0  
    \end{pmatrix}\ ,
\end{equation}
for some coefficients $a,b$. Clearly, $A^1A^1=0$, so the states generated by these tensors satisfy the Rydberg constraint for all $a,b$. Both $A^0$ and $A^1$ have rank equal to one, so they should be representable as bond dimension $\chi=1$ SIMPS. Indeed, we can represent the same states with the following SIMPS tensor,
\begin{equation}
    \begin{array}{ll}
        B^{00}=a & B^{01} = b   \\[\medskipamount]
        B^{10}=a & B^{11} = 0 
    \end{array}
\end{equation}
Using the redundancy provided by SIMPS gauge transformations \eqref{eq:fundamental}, one can show that this is in fact the most generic bond dimension $\chi=1$ SIMPS which satisfies $B^{11}=0$. By considering SIMPS of higher bond dimension satisfying $B^{11}=0$, we can systematically go beyond this two-parameter family of states in the space of Rydberg-constrained states.

\subsection{SIMPS for an exact scar state}
As a second example of the use of SIMPS for constrained spin chains, we consider AKLT state \cite{Affleck1989}. The state is defined on a spin-1 Hilbert space whose states we label as $\{|0\rangle, |{\uparrow}\rangle, |{\downarrow}\rangle\}$. While not usually considered as a constrained state, the AKLT state does have ``hidden'' antiferromagnetic order \cite{Kennedy1992,Else2013} which manifests in the fact that the patterns $|{\uparrow}\rangle|{\uparrow}\rangle$ and $|{\downarrow}\rangle|{\downarrow}\rangle$ never appear on two neighbouring spins (along with longer range constraints). Additionally, it was shown in Ref.~\cite{Lin2019} how to map the AKLT state onto an exact scar state of the PXP model, where one of the above constraints maps onto the Rydberg constraint.

Up to constant factors, the MPS of the AKLT state is,
\begin{equation} \label{eq:aklt_mps}
    A^0 = Z,\quad A^{\uparrow} = |0\rangle\langle 1|,\quad A^{\downarrow} = |1\rangle\langle 0|
\end{equation}
Notice that $A^0$ is rank-2, but $A^{\uparrow}$ and $A^{\downarrow}$ are rank-1. Therefore, the SIMPS has mixed bond dimension $\chi^0 = 2$ and $\chi^{\uparrow}=\chi^{\downarrow}=1$. Using Eq.~\eqref{eq:mpstosimps}, we find,
\begin{equation} \label{eq:aklt_simps}
    \begin{array}{lll}
        B^{00}=Z & B^{0{\uparrow}} = |0\rangle & B^{0{\downarrow}} = |1\rangle  \\[\medskipamount]
        B^{{\uparrow}0}=-\langle 1| & B^{{\uparrow}{\uparrow}}= 0 & B^{{\uparrow}{\downarrow}} = 1 \\[\medskipamount]
        B^{{\downarrow}0}=\langle 0| & B^{{\downarrow}{\uparrow}} = 1 & B^{{\downarrow}{\downarrow}} = 0
    \end{array}\ .
\end{equation}
Observe that $B^{{\uparrow}{\uparrow}}=B^{{\downarrow}{\downarrow}}=0$ as required by the constraint. Using this form, we can easily deform the AKLT state while preserving the constraint by fixing $B^{{\uparrow}{\uparrow}}=B^{{\downarrow}{\downarrow}}=0$ and freely varying the other components. {This could facilitate, \textit{e.g.}, finding an exact path of SIMPS in the constrained Hilbert space that goes through a phase transition, similar to the MPS paths derived in Refs.~\cite{Wolf2006,Jones2021}.}

Going forward, we believe the simpler implementation of local constraints in SIMPS as compared to MPS will be generally useful for studying constrained spin chains both numerically and analytically, as we discuss further in the next section.

\section{Discussion} \label{sec:discussion}

We finish by discussing our results. First, we use our results to analyze an example that appeared previously in the literature. Then, we discuss our results in the context of a conjectured equivalence between quantum wire and SPT order. Finally, we suggest some future directions of research for SIMPS.

\subsection{Analyzing an example from Ref.~\cite{Wahl2012}} \label{sec:wahl}

Here we show how SIMPS can clarify some features of an example that appeared previously in the literature. The example was given in Ref.~\cite{Wahl2012}, as an example of an MPS with LRLE. It is defined by the following MPS with $d=D=3$,
\begin{equation}
\begin{aligned} \label{eq:wahl_tensor}
    &A^0 =\begin{pmatrix}
        1 & 0 & 0 \\
        0 & 1 & 0 \\
        1 & 0 & 0
    \end{pmatrix}
    ,\quad
    A^1 =\begin{pmatrix}
        0 & 0 & 1 \\
        0 & 1 & 0 \\
        0 & 0 & -1
    \end{pmatrix},
    \\
    &A^2 =\begin{pmatrix}
        0 & 1 & 0 \\
        1 & 0 & 0 \\
        0 & 1 & 0
    \end{pmatrix}.
\end{aligned}
\end{equation}
It was shown in Ref.~\cite{Wahl2012} that, by measuring the bulk spins, one can establish a maximally entangled state between a pair of two-dimensional spins that are appropriately coupled to the boundaries. This example initially appears puzzling for a number of reasons. First, there is LRLE despite the fact that the MPS tensors are not unitary, and the amount of entanglement generated is that of a pair of two-dimensional spins rather than three-dimensional as one may have expected from a $D=3$ MPS. Second, it is not clear if the state is a quantum wire, \text{i.e.}, if there is an efficient classical side processing to determine the byproduct operator. Finally, Ref.~\cite{Wahl2012} showed that this state only has a single onsite $Z_2$ symmetry, which is not enough to support non-trivial SPT order. Therefore, it is not clear if the LRLE can be understood in terms of SPT order and string order parameters. 

SIMPS provide the answer to all of these questions and more. Using Eq.~\eqref{eq:mpstosimps}, we find that this state can be represented by the following SIMPS,
\begin{equation} \label{eq:simps_ex_wahl}
    \begin{array}{lll}
        B^{00}=\id & B^{01} = X & B^{02} = \id  \\[\medskipamount]
        B^{10}=X & B^{11} = Z & B^{12} = X \\[\medskipamount]
        B^{20}=X & B^{21} = \id & B^{22} = X
    \end{array}\ .
\end{equation}
This is a state of the form $|\psi_\ab\rangle$, so all of our analysis of the physical and computational properties of these states applies. In particular, the LRLE can be attributed to unitary evolution in the SIMPS virtual space, the amount of entanglement generated matches the SIMPS bond dimension $\chi=2$, and the state is a quantum wire. The state also has a pair of non-onsite symmetries defined by Eq.~\eqref{eq:ab_symms} and belongs to a non-trivial SPT phase protected by them, and the quantum wire follows from measuring these string order parameters. Finally, our analysis shows that the quantum wire property of this state is robust to perturbations that preserve the non-onsite symmetries.

Ref.~\cite{Wahl2012} also constructed the most general form of MPS having LRLE, with Eq.~\eqref{eq:wahl_tensor} being one example.  The authors showed that LRLE in MPS can always be understood in terms of an underlying unitary evolution in the virtual space of a different tensor network. However, the general form of this tensor network is very non-local, and is not helpful in assessing the symmetries of a state, nor its capability to act as a quantum wire. In a sense, SIMPS are a bridge between MPS and the general structure outlined in Ref.~\cite{Wahl2012}. It would be interesting to study whether the general MPS of Ref.~\cite{Wahl2012} are additionally quantum wires, and what symmetries they possess.

\subsection{Quantum wire and SPT order}

Our results support the conjecture that all quantum wires must have string order. This connection was first explored in Ref.~\cite{Verstraete2004} and later in Refs.~\cite{Popp2005,Venuti2005,Doherty2009,Skrovseth2009,Hong2023}. In Ref.~\cite{Popp2005}, a counterexample of a state with LRLE but no string order was proposed. However, that state is not a quantum wire as it does not admit an efficient way to compute the byproduct operator, so it does not refute the conjecture.

We formally state the conjecture here,
\begin{Conj} \label{conj}
    Any 1D state that possess long-range localizable entanglement with an efficient method of computing byproduct operators must belong to a non-trivial SPT phase.
\end{Conj}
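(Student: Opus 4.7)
The plan is to establish the two halves of SPT order—a symmetry group and a non-trivial projective action in the virtual space—directly from the quantum wire data. First, I would formalize ``efficient side processing'' as the statement that the byproduct operator $B_{\vec s}$ produced after measuring the bulk with outcomes $\vec s$ depends on $\vec s$ only through a bounded-size classical register, equivalently a finite-state automaton processing the outcome string. Combined with LRLE, which forces $B_{\vec s}$ to be unitary and the post-measurement boundary state to be maximally entangled, this register takes values in a finite group $G$ generated by the achievable byproducts, with composition inherited from concatenation of outcome strings.

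Next, I would manufacture a symmetry from the automaton structure. Sequences of measurement outcomes on a finite interval that return the register to its starting state yield byproduct $\id$, so projecting the bulk onto such a sequence leaves the boundary state unchanged. Reinterpreting these ``return-to-origin'' patterns as diagonal operators in the measurement basis gives a family of short strings under which $|\psi\rangle$ is invariant, and by padding with arbitrary bulk content one extends them to global operators fixing $|\psi\rangle$. I would show these global operators form a group isomorphic to $G$; here the SIMPS viewpoint is essential because it allows the symmetries to be non-onsite, which the present paper demonstrates is fully compatible with being a quantum wire, so the argument cannot assume onsite form.

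With the symmetry in hand, SPT non-triviality should follow from the distinguishability of byproducts. If the virtual representation of $G$ were linear, two outcome strings yielding different byproducts would still be related in the virtual space by an onsite-equivalent transformation, contradicting the fact that distinct elements of $G$ produce genuinely distinguishable maximally entangled boundary states. Turning this heuristic into the statement that the phase in $V_g V_h = \omega(g,h) V_{gh}$ lies in a non-trivial cohomology class is a standard SPT argument once the representation is identified, essentially mirroring the string-order analysis reviewed in Sec.~\ref{sec:mps_wire} but applied to possibly non-onsite symmetries via SIMPS, combined with Theorem~\ref{theorem:fingerprints} to promote the conclusion from the fixed-point tensor to the whole phase.

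The hard part will be controlling the full strength of the ``efficient'' assumption. A priori an efficient classical side-processing algorithm is any polynomial-time Boolean function of $\vec s$, and nothing obvious forces the byproducts to factor through a finite automaton or to close into a group—the non-linear side processing seen in Sec.~\ref{sec:simps_wire} shows that even the dependence on $\vec s$ can be quite intricate. Proving the reduction from ``efficient'' to ``group-structured'' byproducts is a rigidity theorem for quantum wires and is, in my view, the central obstacle. This is the place where I would expect genuinely new input beyond tensor-network representation theory, perhaps drawing on the contextuality-based perspective on MBQC referenced in Sec.~\ref{sec:simps_wire} or on classifications of computational phases of matter.
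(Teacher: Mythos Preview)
The statement you are attempting to prove is labelled \textbf{Conjecture} in the paper, and the paper does not offer a proof. Immediately after stating it, the authors write that they are ``purposely vague in [their] meaning of `efficient' as different definitions can lead to different results,'' and they cite only a partial result (Ref.~\cite{Hong2023}) valid under the restrictive extra hypotheses of linear side processing and Pauli byproducts. The paper's own contribution is to exhibit new examples (the SIMPS wires with non-linear side processing) that are \emph{consistent} with the conjecture while violating those extra hypotheses, thereby supporting but not establishing it. There is therefore no ``paper's own proof'' to compare your proposal against.

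On the merits of your sketch: you correctly locate the crux. Your first two steps---coercing efficient side processing into a finite-state/finite-group structure, and then promoting return-to-origin outcome patterns to bona fide global symmetries of $|\psi\rangle$---are both genuine leaps, not routine manipulations. Nothing in the paper (or in the cited literature) supplies the rigidity theorem you need to pass from ``polynomial-time Boolean function of $\vec s$'' to ``factors through a finite automaton with group structure''; indeed the non-linear example in Sec.~\ref{sec:simps_wire} is presented precisely to show that the dependence on $\vec s$ can be complicated. Your third step, deducing a non-trivial cocycle from distinguishability of byproducts, is also not standard: a linear virtual representation does not obviously force distinct byproducts to become indistinguishable at the boundary, so the contradiction you invoke needs a real argument. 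In short, your proposal is a reasonable outline of where a proof would have to go, and you have honestly flagged the main obstruction, but it does not close the gap---which is consistent with the paper treating the statement as open.
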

\noindent
We are purposely vague in our meaning of ``efficient'' as different definitions can lead to different results. For example, under the assumption that all classical side processing of measurement outcomes is linear and byproduct operators are Pauli operators, Ref.~\cite{Hong2023} showed that teleportation implies SPT order with onsite symmetries. In contrast, the results herein violate the first assumption, since we needed non-linear classical processing to use our models for teleportation. Nevertheless, our models still have SPT order, albeit only under a generalized notion of symmetry (non-onsite symmetry). Our results therefore support Conjecture \ref{conj}, while also showing how the notion of SPT order can and should be generalized. We expect that relaxing these assumptions further may lead to even more general notions of symmetry and SPT order, which should have interesting physical and computational ramifications.

\subsection{Future directions}

% There are several interesting applications for which SIMPS may be useful. First, we have shown how SIMPS are suited to capture non-onsite symmetries. This includes both anomalous and non-anomalous symmetries. However, at present, the types of symmetries that are captured by SIMPS are limited to symmetries that are diagonal in an onsite basis, possibly composed with simple onsite permutations. One can certainly write down interesting symmetries that do not have this form. The question is whether or not the kinds of symmetries captured by SIMPS can nonetheless represent the broad scope of physics that is possible with non-onsite symmetries. For example, non-onsite representations of a group $G$ can be labelled by an element of the third cohomology class $H^3(G,U(1))$ \cite{Chen2011a,Else2014}. Is it possible that all cohomology classes can be represented by symmetries composed of a diagonal non-onsite part and onsite permutation operators? If so, we expect that one could easily reproduce the classification of SPT phases with non-onsite symmetries, which was originally obtained using MPS \cite{GarreRubio2023}, with SIMPS.

There are several interesting applications for which SIMPS may be useful. First, we have shown how SIMPS are suited to capture all classes of anomalous group-like symmetries (see Sec.~\ref{sec:anomalous}). Ref.~\cite{GarreRubio2023} recently derived a classification of phases of matter with such symmetries in terms of so-called ``$L$-symbols''.
Is it possible to reproduce this classification and reinterpret the $L$-symbols in terms of SIMPS gauge transformations, similar to how the classification of phases with onsite symmetries uses MPS gauge transformations?

As discussed in Sec.~\ref{sec:constrained}, we expect that SIMPS will be useful in numerical studies of constrained spin chains due to the linear implementation of the constraint. However, making use of this constraint would require modifying standard algorithms such that the SIMPS structure is properly taken advantage of (\text{i.e.}, simply rewriting the SIMPS as an MPS and doing normal MPS algorithms would get rid of the possible advantages). Whether the cost of this modification outweighs the benefits of the linear implementation of the constraint remains an open question. We also hope that SIMPS will be useful for analytical studies of constrained spin chains. For example, they may help to classify phases of matter in constrained spin chains.

Finally, an obvious extension of SIMPS is to go to higher dimensions. We can, for example, define two-dimensional tensor networks where each tensor depends on the value of all spins around a plaquette of the lattice. This will be reminiscent of the plaquette entanglement structure of the CZX model \cite{Chen2011a} which fits into the general framework of semi-injective PEPS \cite{Molnar2018semi}. These states would enable the study of non-onsite symmetries, quantum computation, and constrained systems in higher dimensions, as we have done here for 1D.

\acknowledgements

We thank A. Friedman, O. Hart, Y. Hong, N. Schuch, R. Vanhove, and R. Verresen for helpful comments. This work was funded by the Simons Collaboration on Ultra-Quantum Matter (UQM), which is funded by grants from the Simons Foundation (651440).

\printbibliography

\appendix 

\section{Proof of Proposition \ref{prop:conversion}} \label{app:conversion}

In this section we prove Proposition \ref{prop:conversion}. Consider a normal SIMPS tensor $B^{ij}$, and define the associated MPS tensor $A^i$ as in Eq.~\eqref{eq:simpstomps}. We have,
\begin{equation}
    A^{i_1}\cdots A^{i_L} = w^\dagger (|i_1\rangle\langle i_L|\otimes B^{i_1i_2}\cdots B^{i_{L-1}i_L})u
\end{equation}
where we used the fact that $\langle i|uw^\dagger|j\rangle=B^{ij}$. Given an arbitrary $D\times D$ matrix $M$, define the $d\chi\times d\chi$ matrix $M'=wMu^\dagger$. Then, decompose $M'$ as,
\begin{equation}
    M'=\sum_{s_1,s_2=0}^{d-1}|s_1\rangle\langle s_2|\otimes M_{s_1,s_2}\ ,
\end{equation}
where each $M_{s_1,s_2}$ is a $\chi\times \chi$ matrix. Since $B^{ij}$ is normal, $M_{s_1,s_2}$ is contained in $\mathrm{span}\{ B^{s_1i_1}\cdots B^{i_{L}{s_2}}\}_{i_1,\dots i_L}$ for $L\geq L_1$, so $M'$ is contained in $\mathrm{span}\{ |i_1\rangle\langle i_L|\otimes B^{i_1i_2}\cdots B^{i_{L-1}i_L}\}_{i_1,\dots, i_L}$ for $L\geq L_1+2$. Therefore, $w^\dagger M' u=M$ is in $\mathrm{span}\{A^{i_1}\cdots A^{i_L}\}_{i_1,\dots,i_L}$ for $L\geq L_1+2$. Since $M$ was arbitrary, $A^i$ is normal with injectivity length $L_0\leq L_1+2$. 

Now, consider a normal MPS tensor $A^i$ and define the associated SIMPS tensor $B^{ij}$ as in Eq.~\eqref{eq:mpstosimps}. Then we have,
\begin{equation}
\begin{aligned}
    &B^{s_1i_1} B^{i_1i_2}\cdots B^{i_{L-1}i_L}B^{i_Ls_2} = \\
    &P^{s_1\dagger}A^{i_1}A^{i_2}\cdots A^{i_L}A^{s_2}P^{s_2}
\end{aligned}
\end{equation}
Note that $A^{s_2}$ is invertible on its domain. That is, there is a matrix $(A^{s_2})^{-1}$ such that $(A^{s_2})^{-1}A^{s_2}P^{s_2}=P^{s_2}$. Now, given any $\chi^{s_1}\times \chi^{s_2}$ matrix $M$, define the $D\times D$ matrix $M'=P^{s_1}MP^{s_2\dagger}(A^{s_2})^{-1}$. Since $A^i$ is normal, $M'$ is contained in $\mathrm{span}\{A^{i_1}A^{i_2}\cdots A^{i_{L}}\}_{i_1,\dots,i_{L}}$ when $L\geq L_0$. Therefore, $P^{s_1\dagger}M'A^{s_2}P^{s_2}=M$ is contained in $\mathrm{span}\{B^{s_1i_1} B^{i_1i_2}\cdots B^{i_{L-1}i_L}B^{i_Ls_2}\}_{i_1,\dots,i_L}$ when $L\geq L_0$. Since $M$ was arbitrary, $B^{ij}$ is normal with injectivity length $L_1\leq L_0$. \hfill \qedsymbol

\section{Proof of Theorem \ref{theorem:fundamental}} \label{app:fundamental}

Here we prove the fundamental theorem of SIMPS. Suppose $B^{ij}$ and $A^{ij}$ are two normal SIMPS tensors with injectivity length $L_1$ that generate the same state for all system sizes $N$. For simplicity of the presentation, we take $L_1=1$, but a similar proof follows for arbitrary $L_1$. Take $N$ even with $N\geq 6$ and consider the following states obtained by projecting every other spin onto a fixed state $|s_i\rangle$,
\begin{equation}
\begin{aligned}
    &|\psi_A(s_1,s_3,\dots,s_{N-1})\rangle = \\
    &\sum_{i_2,i_4,\dots,i_N}\mathrm{Tr}(A^{s_1i_2}A^{i_2s_3}\cdots A^{i_Ns_1})|i_2i_4\cdots i_N\rangle
\end{aligned}
\end{equation}
and $|\psi_B(s_1,s_3,\dots,s_{N-1})\rangle$ is defined similarly with $A^{ij}\rightarrow B^{ij}$. By assumption, $|\psi_A(s_1,s_3,\dots,s_{N-1})\rangle=|\psi_B(s_1,s_3,\dots,s_{N-1})\rangle$ for all $s_i$. Observe that $|\psi_A(s_1,s_3,\dots,s_{N-1})\rangle $ is a non-translationally invariant MPS defined by the injective tensors $A^{s_j,i}A^{i,s_{j+2}}$ around site $j$ and similarly for $B$. Therefore, by the Fundamental Theorem of MPS, as stated in Theorem 1 of Ref.~\cite{Molnar2018}, there exist invertible matrices $X_{s_j,s_{j+2}}$ such that 
% \begin{equation}
%     A_{[j]}^i=X_jB_{[j]}^iX_{j+1}^{-1}
% \end{equation}
\begin{equation}
A^{s_j,i}A^{i,s_{j+2}}=X_{s_j,s_{j+2}}B^{s_j,i}B^{i,s_{j+2}}X_{s_{j+2},s_{j+4}}^{-1}.
\end{equation}
The right-hand side of the above equation depends on $s_{j+4}$ while the left-hand side does not. Using this and the fact that $B^{ij}$ is normal, we can show that $X_{s,s'}$ is independent of $s'$. Therefore, we can write $X_{s,s'}\equiv X_{s}$ such that,
\begin{equation} \label{eq:ft1}
A^{s,i}A^{i,s'}=X_{s}B^{s,i}B^{i,s'}X_{s'}^{-1}.
\end{equation}
We can repeat the above calculation with $N=3k$ and $k\geq 3$ where every third spin is projected onto a fixed state. Doing so, we find that,
\begin{equation} \label{eq:ft2}
A^{s,i}A^{i,j}A^{j,s'}=\widetilde{X}_{s}B^{s,i}B^{i,j}B^{j,s'}\widetilde{X}_{s'}^{-1}.
\end{equation}
for some other invertible matrices $\widetilde{X}_{s}$. By concatenating Eq.~\eqref{eq:ft1} three times and comparing to Eq.~\eqref{eq:ft2} concatenated two times, we can use the uniqueness of gauge transformations to conclude that $X_s\propto \widetilde{X}_{s}$ where the implied proportionality constant does not depend on $s$. Finally, we have,
\begin{equation}
    \begin{aligned}
    A^{s,i}A^{i,j}A^{j,s'} &={X}_{s}B^{s,i}B^{i,j}B^{j,s'}{X}_{s'}^{-1} \\
    &=X_{s}B^{s,i}B^{i,j}X_j^{-1}A^{j,s'}
    \end{aligned}
\end{equation}
If we set $j=s$, and use the fact that $B^{i,j}$ is normal, we can contract the index $i$ to replace $B^{s,i}B^{i,s}$ with $I$, giving,
\begin{equation}
    {X}_{s}B^{s,s'}{X}_{s'}^{-1}
    =X_{s}X_{s}^{-1}A^{s,s'}=A^{s,s'}
\end{equation}
which is the claimed result. \hfill \qedsymbol

\section{Proof of Lemma \ref{lemma:fingerprints}} \label{app:fingerprints}

The following proof applies to any normal SIMPS of the form given in Eq.~\eqref{eq:fp_simps}. First, we present some facts about the symmetries of the SIMPS. The matrices $B_\ab^{ij}$ satisfy the relations,
\begin{equation} \label{eq:fp_simps_symms_1a}
    \begin{aligned}
        &(-1)^{b_{ij}}B_\ab^{ij} = XB_\ab^{ij}X \\
        &(-1)^{a_{ij}}B_\ab^{ij} = ZB_\ab^{ij}Z \\
    \end{aligned}
\end{equation}
as well as,
\begin{equation} \label{eq:fp_simps_symms_2}
    \begin{aligned}
        &B_\ab^{i+1,j} = X^{a_{ij}+a_{i+1,j}}B_\ab^{ij}Z^{b_{ij}+b_{i+1,j}} \\
        &B_\ab^{i,j+1} = X^{a_{ij}+a_{i,j+1}}B_\ab^{ij}Z^{b_{ij}+b_{i,j+1}} 
    \end{aligned}
\end{equation}
where all addition on indices is done modulo $d$. As discussed in the main text, Eq.~\eqref{eq:fp_simps_symms_1a} implies a group of non-onsite symmetries $G_\ab=\langle U^{\bf{a}}, U^{\bf{b}}\rangle$. If we assume the SIMPS is normal, then these symmetries form a faithful representation of $Z_2\times Z_2$, \text{i.e.}, $U^{\bf{a}}$ and $U^{\bf{b}}$ are both non-trivial operators and they are distinct from each other, even after modding out by phase factors that are independent of $\ab$. To see this, recall that $B_\ab^{ij}$ being normal means that there exists an $L$ such that $\mathrm{span}\{B_\ab^{s_1i_1}B_\ab^{i_1i_2}\cdots B_\ab^{i_{L_0}s_2}\}_{i_1,\dots, i_{L_0}}=M_\chi$ for all $s_1,s_2$, where $M_\chi$ is the space of all $\chi\times\chi$ matrices. In the present case, the product $B_\ab^{s_1i_1}B_\ab^{i_1i_2}\cdots B_\ab^{i_{L_0}s_2}$ is proportional to $X^{\alpha_{s_1,i_1,i_2,\dots,i_{L_0},i_{s_2}}} Z^{\beta_{s_1,i_1,i_2,\dots,i_{L_0},i_{s_2}}}$ where, 
\begin{align}
\alpha_{i_1,\dots, i_n}&={a_{i_1i_2}+\dots+a_{i_{n-1}i_n}}\mod 2, \nonumber\\
\beta_{i_1,\dots, i_n}&={b_{i_1i_2}+\dots+b_{i_{n-1}i_n}}\mod 2. \nonumber
\end{align}
These products span the whole space of matrices if there is a choice of $i_1,\dots, i_{L_0}$ such that the above product equals $P$ for all $P=I,X,Y,Z$, which means the map, 
\begin{equation} \label{eq:surj_map}
\Gamma:i_1,\dots, i_{L_0}\mapsto (\alpha_{s_1,i_1,i_2,\dots,i_{L_0},i_{s_2}},\beta_{s_1,i_1,i_2,\dots,i_{L_0},i_{s_2}}),
\end{equation}
is a surjective map into the set $\{(x,y)|x,y=0,1\}$ for all $s_1,s_2$. 
% If this map is surjective, then so is $i_1,\dots i_{L_0}\mapsto (\alpha_{i_{L_0+1},i_1,i_2,\dots,i_{L_0},i_{L}},\beta_{s_1,i_1,i_2,\dots,i_{L_0},i_{s_2}})$
% , since we can fix $i_{L_0+1}=s_1$ to get back to the previous case with $s_1=s_2$. 
Finally, we have, 
\begin{align}
U^{\bf{a}}|i_1\cdots i_n\rangle&=(-1)^{\alpha_{i_1,\dots, i_n}}|i_1\cdots i_n\rangle, \nonumber\\
U^{\bf{b}}|i_1\cdots i_n\rangle&=(-1)^{\beta_{i_1,\dots, i_n}}|i_1\cdots i_n\rangle. \nonumber
\end{align}
If the symmetry is not faithful (up to phase factors), then one of $U^{\bf{a}}$, $U^{\bf{b}}$, or $U^{\bf{a}}U^{\bf{b}}$ is proportional to the identity. But this cannot be the case due to the surjectivity of the map $\Gamma$. For example, if $U^{\bf{a}}=\lambda I$ for some $\lambda$, then $\alpha_{i_1,\dots, i_n}$ is independent of $i_1,\dots, i_n$, which contradicts the surjectivity of $\Gamma$ if $n\geq L_0+1$. A similar conclusion holds for the other group elements. Therefore, the symmetry must be faithful for $n\geq L_0+1$.

Now we are ready to prove Lemma 1. The goal of the proof is to use the relations in Eq.~\eqref{eq:fp_simps_symms_1a} and \eqref{eq:fp_simps_symms_2} to replace any symmetric operator $O$ with a diagonal operator $O'$ which acts the same way as $O$ on $|\psi_\ab\rangle$.
Consider any operator $O$ supported on the interval $[x,y]$. This operator can be decomposed in the following way,
\begin{equation}
    O = \sum_{\vec{j}} \mathcal{D}_{\vec{j}}\mathcal{X}_{\vec{j}}.
\end{equation}
Where $\vec{j}=(j_x,j_{x+1},\dots,j_{y})$ is a vector with $j_k =0,\dots, d-1$, $\mathcal{X}_{\vec{j}} = \bigotimes_{k=x}^y \mathcal{X}_k^{j_k}$, $\mathcal{X}=\sum_{i=0}^{d-1} |i+1\rangle\langle i|$ is the generalized Pauli-$X$ operator, and $\mathcal{D}_{\vec{j}}$ is a diagonal matrix.
Note that each operator $O_{\vec{j}}:=\mathcal{D}_{\vec{j}}\mathcal{X}_{\vec{j}}$ is linearly independent. Suppose $[O,U^{\bf{a}}]=[O,U^{\bf{b}}]=0$. Then, since $[O_{\vec{j}},U^{\bf{a}}]\propto O_{\vec{j}}$ (and similarly for $U^{\bf{a}}$), it must be true that $[O_{\vec{j}},U^{\bf{a}}]=[O_{\vec{j}},U^{\bf{b}}]=0$ for each $\vec{j}$ individually. From now on, assume $x=1$ WLOG.

Consider the action of $\mathcal{X}_{\vec{j}}$ on $|\psi_\ab\rangle$. From Eq.~\eqref{eq:fp_simps_symms_2}, we can always push the action of any product of $\mathcal{X}$ operators onto the virtual level of the SIMPS as some product of Pauli operators. Furthermore, using Eq.~\eqref{eq:fp_simps_symms_1a}, we can always move these Pauli operators around the virtual space by acting on the physical degrees of freedom with the operators $u^{\bf{a}}$ and $u^{\bf{b}}$. Because of this, we can collect all of the Pauli operators in one position in the trace in Eq.~\eqref{eq:simps}. Collecting them around site $x=1$, we have,
\begin{equation}
    O_{\vec{j}}|\psi_\ab\rangle = \mathcal{D}_{\vec{j}}\sum_{\vec{i}} s_{\vec{i},\vec{j}} \mathrm{Tr} (P_{\vec{i},\vec{j}}B^{i_1i_2}\cdots B^{i_Ni_1})|\vec{i}\rangle
\end{equation}
where we use the shorthand notation $\vec{i}=i_1,\cdots, i_N$. Therein, $P_{\vec{i},\vec{j}}$ is some Pauli operator and $s_{\vec{i},\vec{j}}=\pm 1$ is a result of using  Eq.~\eqref{eq:fp_simps_symms_1a} to move the Paulis around. We can absorb the signs $s_{\vec{i},\vec{j}}$ into $\mathcal{D}_{\vec{j}}$ by defining a modified diagonal operator $\mathcal{D}'_{\vec{j}}$. Note that the signs $s_{\vec{i},\vec{j}}$ depend only on the values of $\vec{i}$ in the interval $[x,y]$, so $\mathcal{D}'_{\vec{j}}$ is still supported on the interval $[x,y]$. 

Then we have,
\begin{equation} \label{eq:proof_twisted_state}
    O_{\vec{j}}|\psi_\ab\rangle = \mathcal{D}'_{\vec{j}}\sum_{\vec{i}} \mathrm{Tr} (P_{\vec{i},\vec{j}}B^{i_1i_2}\cdots B^{i_Ni_1})|\vec{i}\rangle.
\end{equation}
Now we use the fact that $O$ is a symmetric operator. This implies that $U^{\bf{a}}O_{\vec{j}}|\psi_\ab\rangle = O_{\vec{j}}|\psi_\ab\rangle$, and similarly for $U^{\bf{b}}$, \text{i.e.}, the state $O_{\vec{j}}|\psi_\ab\rangle$ has neutral symmetry charge. Up to $\mathcal{D}'_{\vec{j}}$, Eq.~\eqref{eq:proof_twisted_state} can be interpreted as a symmetry twisted state, defined by the insertion of a virtual symmetry representation into the SIMPS, as discussed in Sec.~\ref{sec:twists}. The twisted state has neutral charge if and only if $P_{\vec{i},\vec{j}}=I$ for all $\vec{i},\vec{j}$ such that $\mathcal{D}'_{\vec{j}}|\vec{i}\rangle\neq 0$ (note that this only works if $U^{\bf{a}}$ and $U^{\bf{b}}$ are distinct non-trivial operators). Therefore, we can remove $P_{\vec{i},\vec{j}}$ from Eq.~\eqref{eq:proof_twisted_state} without changing the state, giving,
\begin{equation}
    O_{\vec{j}}|\psi_\ab\rangle = \mathcal{D}'_{\vec{j}}\sum_{\vec{i}} \mathrm{Tr}(B^{i_1i_2}\cdots B^{i_Ni_1})|\vec{i}\rangle \equiv \mathcal{D}'_{\vec{j}}|\psi_\ab\rangle
\end{equation}
So the operators $O$ and $O':=\sum_{\vec{j}} \mathcal{D}'_{\vec{j}}$ act in the same way on $|\psi_\ab\rangle$. Note that $O'$ is diagonal, and that it is supported on the same interval as $O$. \hfill \qedsymbol

\end{document}